\def\ps@pprintTitle{%
 \let\@oddhead\@empty
 \let\@evenhead\@empty
 \def\@oddfoot{}%
 \let\@evenfoot\@oddfoot}
\newlength\figureheight 
\newlength\figurewidth
\pgfplotsset{compat=newest}
\pgfplotsset{plot coordinates/math parser=false}
\theoremstyle{definition}
\newtheorem{proposition}{Proposition}
\newtheorem{remark}{Remark}
\begin{document}

\begin{frontmatter}
\title{Efficient Monte Carlo Simulation of the Left Tail of Positive Gaussian Quadratic Forms}
\author[rvt]{Chaouki Ben Issaid}
\ead{chaouki.benissaid@kaust.edu.sa}

\author[rvt]{Mohamed-Slim Alouini}
\ead{slim.alouini@kaust.edu.sa}

\author[rvt,fct]{Ra\'ul Tempone}
\ead{raul.tempone@kaust.edu.sa}

\address[rvt]{King Abdullah University of Science and Technology (KAUST), Computer, Electrical and Mathematical Science and Engineering (CEMSE) Division, Thuwal 23955-6900, Saudi Arabia}
\address[fct]{Alexander von Humboldt Professor in Mathematics for Uncertainty Quantification, RWTH Aachen University, 52062 Aachen, Germany}

\begin{abstract}
Estimating the left tail of quadratic forms in Gaussian random vectors is of major practical importance in many applications. In this paper, we propose an efficient and robust importance sampling estimator that is endowed with the bounded relative error property. This property significantly reduces the number of simulation runs required by the proposed estimator compared to naive Monte Carlo. Thus, our importance sampling estimator is especially useful when the probability of interest is very small. Selected simulation results are presented to illustrate the efficiency of our estimator compared to naive Monte Carlo in both central and non-central cases, as well as both real and complex settings.
\end{abstract}

\begin{keyword}
Importance sampling, left tail, positive quadratic forms, Gaussian random vectors, bounded relative error.
\end{keyword}

\end{frontmatter}

\section{Introduction}
Quadratic forms can appear when the effect of inequality between errors in terms of variance and correlation is examined in a two-way analysis of variance \cite{box1954}, when the constrained least-squares estimator is studied \cite{HSUAN1985}, and during statistical hypothesis testing. Many test statistics,such as the test statistics in covariance structure analysis \cite{Shapiro1983}, and the general likelihood ratio statistic \cite{Vuong1989}, can be expressed in terms of quadratic forms. These tests have a wide range of applicability. For instance, the multilocus association test for the genetic dissection of complex diseases in genetic studies \cite{Tang2010}, the spectral analysis of the Wishart ensemble or counting string vacua in fields such as string theory \cite{Bausch2013}, and non-coherent detection \cite{Divsalar1990}-\cite{Raphaeli1996} and combining diversity \cite[Chap. 14]{Proakis1994} in communication theory.

Gurland investigated the distribution of quadratic forms and ratios of quadratic forms \cite{gurland1953, gurland1955}. The author presented these distributions in terms of an infinite sum involving Laguerre Polynomials provided that the semi-moments are known. However, Gurland's expression of the coefficients is not very suitable for computation, and the estimate of the truncation error is given under certain assumptions. Other works, such as Ruben \cite{ruben1960, ruben1962} and Shah \cite{shah1961, shah1963}, presented the distribution in terms of MacLaurin series or $\chi^2$ densities. As noted by Shah \cite{shah1963}, most of these expansions are not practical, or some of them do not provide an estimate of the truncation errors. Kotz \textit{et al.} unified these approaches in their works \cite{kotz1967a, kotz1967b} and derived a series representation for both the central and non-central cases.  

In general, the exact distribution of a linear combination of independent chi-square variates is a challenging task. In fact, various approximations, have been proposed in the literature, for example, Imhof \cite{imhof1961}, Davies \cite{davies1963}, and Solomon and Stephens \cite{solomon1977}. Based on the work of Imhof \cite{imhof1961}, Davies \cite{davies1963} presented a numerical approach to inverting the characteristic function of a real random variable with the aim of computing its left tail. The author showed that the method accurately produces the distribution of a central chi-squared random variable for various numbers of degrees of freedom. Rice (1980) \cite{Rice1980} presented a generalization of this approach, including two numerical integration methods of inverting the characteristic function. As Bausch commented \cite{Bausch2013}, when the probability of interest is very small, most of the existing methods fail to give an accurate result.

It is widely known that when the number of simulation runs of the model is limited and the probability is small (i.e., the occurrence of an event is rare), the naive Monte Carlo (MC) estimator is expensive. As an alternative, we propose in this work an efficient importance sampling (IS) estimator to estimate the probability of interest. The IS method is often used to estimate rare events probabilities. By modifying the dynamics of the simulations, i.e. by introducing a new change of probability measure, the rare event is no longer rare. Therefore, the IS method aims to reduce the number of required simulation runs given a certain confidence interval. However, proposing a poor choice of the new PDF will lead to a large likelihood ratio and, thus, to an estimator with a variance greater than the original MC estimator. That being said, we note that constructing a good biased distribution is the core of importance sampling. To the best of our knowledge, only two works proposing IS schemes for the purpose of computing tails of quadratic forms in Gaussian random vectors have been previously published \cite{Shi2016, Shi2018}. In those works, the authors were interested in the right tail and implemented IS combined with the cross-entropy method. However, the authors did not provide any efficiency analysis of their estimators. In this work, we are interested in estimating the left tail of positive quadratic forms in Gaussian random vectors using IS. We also show that the proposed IS scheme is endowed with the bounded relative error property.

The rest of this paper is organized as follows. First, we briefly describe the problem setting, and provide a lower bound for the probability of interest. In Section \ref{section2}, we prove the efficiency of our proposed estimator. After reviewing some basic notions of IS in Section \ref{section3}, we present our approach to estimating the probability of interest in Section \ref{section4}. We show some simulation results related to selected example of applications prior to concluding the paper. In each example, we compare the computational efficiency of our approach to that of naive MC.
\section{Problem Setting}\label{section2}
Let $X = (X_1,\dots, X_N)^{T}$ be a Gaussian random vector with PDF
\begin{align}
f_X(X) = \frac{\exp\left(-\frac{1}{2} (X-\mu)^{T} \Sigma_X^{-1} (X-\mu)\right)}{\sqrt{(2\pi)^{N} |\Sigma_X|}} ,
\end{align}
where $\mu$ is the mean vector, $\Sigma_X$ is the $(N \times N)$ covariance matrix, assumed to be strictly positive definite, and $|\cdot|$ represents the determinant of a matrix. For a given positive definite matrix $\Sigma \in \mathbb{R}^{N\times N}$ and a threshold $\gamma_0 > 0$, we aim to introduce an efficient IS scheme for computing the left tail of the quadratic form $X^{T} \Sigma X$, i.e.,
\begin{align}\label{P}
P = \mathbb{P}(X^{T} \Sigma X \leq \gamma_0),
\end{align}
as $\gamma_0 \rightarrow 0$. 

Before giving a lower bound on the probability $P$, we re-write its expression more conveniently. First, we write $X = \mu + \Sigma^{\frac{1}{2}} Y$, where $Y$ is a standard Gaussian vector. Then, we have \cite[Ch. 3]{QPM}
\begin{align}\label{qqd}
\nonumber &X^{T} \Sigma X = \left(\mu + \Sigma^{\frac{1}{2}} Y\right)^{T}  \Sigma  \left(\mu + \Sigma^{\frac{1}{2}} Y\right) \\
&= \left(Y + \Sigma^{-\frac{1}{2}} \mu \right)^{T} A \left(Y + \Sigma^{-\frac{1}{2}} \mu \right),
\end{align}
where $A = \Sigma_X^{\frac{1}{2}} \Sigma \Sigma_X^{\frac{1}{2}}$.\\
Note that $A$ is a symmetric matrix, thus using the spectral theorem, there exists an orthogonal matrix $Q$ and a diagonal matrix $\Lambda = diag(\lambda_1,\dots,\lambda_N)$ such that $A = Q^T \Lambda Q$. Now, let $W \neq 0$, then, we have $W^T A W = W^T \Sigma_X^{\frac{1}{2}} \Sigma \Sigma_X^{\frac{1}{2}} W > 0$, since the matrix $\Sigma$ is a positive definite matrix. Therefore, the eigenvalues $\{\lambda_i\}_{i=1}^{N}$ are non-negative. Going back to (\ref{qqd}) and replacing $A$ by its spectral decomposition, we get
\begin{align}
\nonumber &X^{T} \Sigma X = \left( Y + \Sigma^{-\frac{1}{2}} \mu \right)^{T} Q^T \Lambda Q  \left( Y + \Sigma^{-\frac{1}{2}} \mu \right)\\
& = (Z + \alpha)^{T} \Lambda (Z + \alpha),
\end{align}
where $Z = Q Y $ and $\alpha = Q \Sigma^{-\frac{1}{2}} \mu$. We note that $Z$ is a Gaussian random vector with zero mean and a covariance matrix $I$ because the matrix $Q$ is orthogonal, i.e., $Q^{T} Q = Q Q^{T} = I$, where $I$ is the identity matrix. Now, we return to the probability of interest $P$ and re-write it as \cite[Ch. 3]{QPM}
\begin{align}
P = \mathbb{P}\left(S_N = \sum_{i=1}^{N}{\lambda_i (Z_i + \alpha_i)^2} \leq \gamma_0 \right),
\end{align}
where $\{Z_i\}_{i=1}^{N}$ are independent Gaussian RVs with zero mean and unit variance. At a higher level of abstraction, this amounts to determining the CDF of a linear combination of non-central chi-squared RVs with degree 1. In the remainder of this paper, we consider the above expression of $P$. The following proposition gives a lower bound on $P$.
\begin{remark}
If the matrix $\Sigma$ is positive semi-definite, then $A$ is also positive semi-definite. Without loss of generality, we assume that the non-zero eigenvalues are $\{\lambda_i\}_{i=1}^{d}$, where $d < N$. In this case, the probability $P$ is
\begin{align}
P = \mathbb{P}\left(S_d = \sum_{i=1}^{d}{\lambda_i (Z_i + \alpha_i)^2} \leq \gamma_0 \right).
\end{align}
In the rest of this paper, we assume that $d=N$, i.e., the positive definite case, but the same reasoning applies when we simply replace $N$ with $d$ in the positive semi-definite case.
\end{remark}
\begin{proposition}\label{prop1}
Let $X = (X_1,\dots, X_N)^{T}$ be a Gaussian random vector with mean $\mu$ and covariance matrix $\Sigma_X$, and let $\Sigma \in \mathbb{R}^{N\times N}$ be a given matrix. For a fixed threshold $\gamma_0 > 0$, we have
\begin{align}
P = \mathbb{P}(X^{T} \Sigma X \leq \gamma_0) \geq \prod_{i=1}^{N}{\left[1-Q_{\frac{1}{2}}\left(\alpha_i,\sqrt{\frac{\gamma_0}{N \lambda_i}}\right)\right]},
\end{align}
where $Q_{\nu}(\cdot,\cdot)$ is the generalized Marcum-Q function defined as \cite[Eq.(2)]{TSAY}
\begin{align}
Q_{\mu}(a,b) = \frac{1}{a^{\mu-1}} \int_{b}^{\infty}{x^{\mu} \exp\left(-\frac{x^2+a^2}{2}\right) I_{\mu-1}(a x) dx}.
\end{align}
\end{proposition}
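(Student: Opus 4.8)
The plan is to bound $P$ from below by restricting attention to a subset of the event $\{S_N \le \gamma_0\}$ on which the probability factorizes into univariate pieces, and then to identify each univariate factor with the first-order Marcum-Q function. The driving observation is the elementary set inclusion
\begin{align*}
\left\{ \lambda_i (Z_i + \alpha_i)^2 \le \tfrac{\gamma_0}{N}, \ i=1,\dots,N \right\} \subseteq \left\{ \sum_{i=1}^N \lambda_i (Z_i + \alpha_i)^2 \le \gamma_0 \right\},
\end{align*}
which holds because if each of the $N$ nonnegative summands is at most $\gamma_0/N$, their sum cannot exceed $\gamma_0$. Monotonicity of probability then yields $P \ge \mathbb{P}\!\left(\lambda_i(Z_i+\alpha_i)^2 \le \gamma_0/N,\ i=1,\dots,N\right)$.

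Next, since the $Z_i$ are independent, the joint event on the right is a product event and its probability factorizes. Using $\lambda_i > 0$ to divide through, I would obtain
\begin{align*}
P \ge \prod_{i=1}^N \mathbb{P}\!\left( (Z_i+\alpha_i)^2 \le \frac{\gamma_0}{N\lambda_i} \right).
\end{align*}
It then remains to identify each marginal factor. For fixed $i$, the variate $(Z_i+\alpha_i)^2$ is a noncentral chi-squared with one degree of freedom and noncentrality $\alpha_i^2$, whose complementary CDF is exactly $Q_{1/2}$. Concretely, I would write $\mathbb{P}((Z_i+\alpha_i)^2 \le c) = 1 - \mathbb{P}(|Z_i+\alpha_i| > \sqrt{c})$, express the tail as a sum of two Gaussian tails, and match this against the integral definition of the Marcum-Q function to conclude $\mathbb{P}((Z_i+\alpha_i)^2 \le c) = 1 - Q_{1/2}(\alpha_i,\sqrt{c})$. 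Setting $c = \gamma_0/(N\lambda_i)$ and substituting into the product completes the argument.

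The step requiring the most care is this final identification with $Q_{1/2}$. To verify it directly from the stated integral, I would specialize $\mu = 1/2$ and use the closed form $I_{-1/2}(z) = \sqrt{2/(\pi z)}\,\cosh(z)$; after the $x^{1/2}$ and $x^{-1/2}$ factors cancel, the $\cosh$ splits the integrand into $\exp(-(x-\alpha_i)^2/2)$ and $\exp(-(x+\alpha_i)^2/2)$, and the resulting expression is precisely $\mathbb{P}(Z_i+\alpha_i > b) + \mathbb{P}(Z_i+\alpha_i < -b) = \mathbb{P}(|Z_i+\alpha_i|>b)$ with $b = \sqrt{c}$. This also shows the tail is even in $\alpha_i$, so the sign of $\alpha_i$ is immaterial. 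Beyond this Bessel-function bookkeeping I expect no genuine obstacle: the inclusion argument and the independence factorization are routine.
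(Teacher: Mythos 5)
Your proposal is correct and follows essentially the same route as the paper: the set inclusion $\bigcap_i\{(Z_i+\alpha_i)^2 \le \gamma_0/(N\lambda_i)\} \subseteq \{S_N \le \gamma_0\}$, factorization by independence, and identification of each factor as the CDF of a noncentral $\chi^2$ variate with one degree of freedom expressed via $Q_{1/2}$. The only difference is that you explicitly verify the $Q_{1/2}$ identification through $I_{-1/2}(z)=\sqrt{2/(\pi z)}\cosh(z)$, a step the paper simply cites; your verification is correct.
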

\begin{proof}
In this proof, we consider the expression of $P$ 
\begin{align}
P = \mathbb{P}\left(S_N = \sum\limits_{i=1}^{N}{\lambda_i (Z_i + \alpha_i)^2} \leq \gamma_0 \right).
\end{align} 
We have
\begin{align}
\bigcap\limits_{i=1}^{N} \left\lbrace (Z_i + \alpha_i)^2 \leq \frac{\gamma_0}{N \lambda_i} \right\rbrace \subset \left\lbrace \sum_{i=1}^{N}{\lambda_i (Z_i + \alpha_i)^2} \leq \gamma_0 \right\rbrace.
\end{align}
Using the independence of $\{Z_i\}_{i=1}^{N}$ and, thus, the independence of $\{(Z_i + \alpha_i)^2\}_{i=1}^{N}$, we can write
\begin{align}
P \geq \prod_{i=1}^{N}{\mathbb{P}\left((Z_i + \alpha_i)^2 \leq \frac{\gamma_0}{N \lambda_i} \right)} = \prod_{i=1}^{N}{F_{W_i}\left( \frac{\gamma_0}{N \lambda_i}\right)} ,
\end{align}
where $F_{W_i}(\cdot)$ is the CDF of the RV $W_i = (Z_i + \alpha_i)^2$, $\forall i=1,\dots,N$. This corresponds to the CDF of a non-central Chi-squared RV with 1 degree of freedom. Therefore, we can write
\begin{align}
P \geq \prod_{i=1}^{N}{\left[1-Q_{\frac{1}{2}}\left(\alpha_i,\sqrt{\frac{\gamma_0}{N \lambda_i}}\right)\right]}.
\end{align}
\end{proof}
\section{Review of IS}\label{section3}
Let $f_{Z}(\cdot)$ denote the PDF of $Z$; then, we can write $P = \mathbb{E}[\mathbbm{1}_{(S_N \leq \gamma_0)}]$, where $\mathbbm{1}_{(\cdot)}$ is the indicator function and $\mathbb{E}[\cdot]$ is the expectation w.r.t. the probability measure under which the PDF of $Z$ is $f_Z(\cdot)$. Therefore, the naive MC estimator of $P$ is given by
\begin{align}
\hat{P}_{MC} = \frac{1}{M} \sum_{i=1}^{M}{\mathbbm{1}_{(S_N(\omega_i) \leq \gamma_0)}},
\end{align}
where $M$ is the number of MC samples, and $\{S_N(\omega_i)\}_{i=1}^{N}$ are i.i.d. realizations of the RV $S_N$. For each realization of $S_N$, the sequence $\{Z_i(\omega_i)\}_{i=1}^{N}$ is sampled independently according to the PDF $f_Z(\cdot)$.

When dealing with rare event simulations, reducing the variance of the estimator of the quantity of interest causes the number of simulation runs required to achieve a certain accuracy level to become smaller. The IS method is a variance reduction technique that can be used to evaluate the probability of rare events. The core of the IS method is to propose the correct new PDF so that the new estimator has a smaller variance. More specifically, we can re-write $P$ as
\begin{align}
P = \mathbb{E}^*[\mathbbm{1}_{(S_N \leq \gamma_0)} \mathcal{L}(Z)],
\end{align}
where $\mathbb{E}^*[\cdot]$ is the expectation w.r.t. the probability measure under which the PDF of $Z$ is the biased PDF $f_Z^*(\cdot)$, and $\mathcal{L}(\cdot)$ is the likelihood ratio defined as
\begin{align}
\mathcal{L}(Z) = \frac{f_Z(Z)}{f_Z^*(Z)}.
\end{align}
Since $\{Z_i\}_{i=1}^{N}$ are independent, we can write the likelihood ratio in terms of the marginal PDFs of $\{Z_i\}_{i=1}^{N}$, i.e., 
\begin{align}
\mathcal{L}(Z_1,\dots,Z_N) = \prod_{i=1}^{N}{\frac{f_{Z_i}(Z_i)}{f_{Z_i}^*(Z_i)}}.
\end{align}
Thus, the IS estimator of $P$ is given by
\begin{align}\label{ISes}
\hat{P}_{IS} = \frac{1}{M^*} \sum_{i=1}^{M^*}{\mathbbm{1}_{(S_N(\omega_i) \leq \gamma_0)} \mathcal{L}(Z_1(\omega_i),\dots,Z_N(\omega_i))},
\end{align}
where the sequence $\{Z_i\}_{i=1}^{N}$ is sampled according to the biased PDF $f_Z^*(\cdot)$ for each realization $i=1,\dots,M^*$. The likelihood term can be interpreted as a weight that corrects the bias caused by the sampling from the biased PDF $f_Z^*(\cdot)$. In fact, it see that the IS estimator of $P$ is unbiased. 

The efficiency of the proposed IS estimator compared to naive MC can be measured by many criteria. When it comes to evaluating very low probabilities, IS estimators endowed with the bounded relative error property are desirable. A naive MC estimator requires a number of samples $M$ that grows as $\mathcal{O}(P^{-1})$. To achieve the same accuracy, the number of simulation runs $M^{*}$ needed by an IS estimator with a bounded relative error remains bounded, independently of $P$. Mathematically speaking, we say that the IS estimator satisfies the bounded relative error criterion if the following statement holds
\begin{align}
\underset{\gamma_0 \rightarrow 0}{\limsup} \frac{\mathbb{E}^*[\mathbbm{1}_{(S_N \leq \gamma_0)} \mathcal{L}^2(Z_1,\dots,Z_N)]}{P^2} < +\infty.
\end{align}
To have a clear idea of the gain that the proposed IS estimator achieves compared to the naive MC one, we determine the number of simulation runs required by both estimators when the accuracy requirement is fixed, e.g., when the relative error of both estimators is assumed to be the same. We start by defining the relative error of both estimators as
\begin{align}
\varepsilon &= \frac{C}{P} \sqrt{\frac{P(1-P)}{M}}, \\
\varepsilon^* &= \frac{C}{P} \sqrt{\frac{\mathbb{V}^*[\mathbbm{1}_{(S_N \leq \gamma_0)} \mathcal{L}(Z_1,\dots,Z_N)]}{M^*}},
\end{align}
where we take $C = 1.96$, which corresponds to a $95\%$ confidence interval, and $\mathbb{V}^*$ denotes the variance w.r.t. the probability measure under which the PDF of Z is $f_Z^*(\cdot)$.
\section{Proposed IS Scheme}\label{section4}
\subsection{Real Valued Case}
Our approach consists of shifting the mean and scaling the variance of each variate $\{Z_i\}_{i=1}^{N}$ so that the marginal biased PDF is written as
\begin{align}\label{bias}
f_{Z_i}^*(z) = \frac{1}{\sqrt{2\pi} \sigma_i} \exp\left(-\frac{1}{2} \left(\frac{z+\alpha_i}{\sigma_i}\right)^2\right).
\end{align}
While the original PDF of $Z_i$, $\forall i=1,\dots, N$, is a standard Gaussian, the biased PDF corresponds to a Gaussian with mean $-\alpha_i$ and variance $\sigma_i^2$. In our approach, we choose the parameter $\sigma_i$, hoping that the event of interest becomes no longer rare. A possible solution is to look for $\sigma_i$ in the form  $\sigma_i^2 = \theta \frac{\gamma_0}{\lambda_i}$, where $\theta$ is a positive parameter such that the mean of $\sum\limits_{i=1}^{N}{\lambda_i (Z_i+ \alpha_i)^2}$ under the biased PDF is equal to $\gamma_0$. That is,
\begin{align}
\mathbb{E}^*\left[\sum\limits_{i=1}^{N}{\lambda_i (Z_i+ \alpha_i)^2} \right] = \gamma_0.
\end{align}
Using the linearity of the expected value and the fact that, under the new probability measure, $\{Z_i + \alpha_i\}_{i=1}^{N}$ are zero mean Gaussian RVs with variance $\sigma_i^2$, we get
\begin{align}\label{choice}
\sigma_i = \sqrt{\frac{\gamma_0}{N \lambda_i}},~ i=1,\dots, N.
\end{align} 
The above value of $\sigma_i$ is clearly non-negative since the eigenvalues $\{\lambda_i\}_{i=1}^{N}$ are all non-negative. As the threshold $\gamma_0$ approaches zero, the values of $\sigma_i$ become smaller, leading to the reduction of the variance of the IS estimator. Defining the biased PDFs using the values of $\sigma_i$ obtained in (\ref{choice}), we show that our proposed IS estimator satisfies the bounded relative error property.
\begin{proposition}
Let the marginal biased PDFs be defined as in (\ref{bias}), and $\sigma_i$ as in (\ref{choice}). Then, the IS estimator (\ref{ISes}) of the probability $P$, given by (\ref{P}), satisfies
\begin{align}
\underset{\gamma_0 \rightarrow 0}{\limsup} \frac{\mathbb{E}^*[\mathbbm{1}_{(S_N \leq \gamma_0)} \mathcal{L}^2(Z_1,\dots,Z_N)]}{P^2} \leq \prod_{i=1}^{N}{\frac{\pi e}{\alpha_i^{2}}} < +\infty.
\end{align}
\end{proposition}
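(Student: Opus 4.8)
The plan is to control the second moment in the numerator by a pointwise bound on the likelihood ratio restricted to the rare event $\{S_N \le \gamma_0\}$, and then to divide by the square of the lower bound already supplied by Proposition~\ref{prop1}. Writing the per-coordinate ratios from (\ref{bias}) explicitly, the nominal density of each $Z_i$ is standard Gaussian, so
\begin{align}
\mathcal{L}(Z_1,\dots,Z_N) = \prod_{i=1}^{N}\sigma_i\exp\left(-\frac{Z_i^2}{2}+\frac{(Z_i+\alpha_i)^2}{2\sigma_i^2}\right),
\end{align}
whence $\mathcal{L}^2 = \big(\prod_{i}\sigma_i^2\big)\exp\big(-\sum_i Z_i^2+\sum_i (Z_i+\alpha_i)^2/\sigma_i^2\big)$. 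The key structural feature I would exploit is that the variance choice (\ref{choice}) turns the event into a bounded ellipsoid in the shifted variables $Z_i+\alpha_i$.

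Concretely, on $\{S_N\le\gamma_0\}$ I would multiply the defining inequality $\sum_i\lambda_i(Z_i+\alpha_i)^2\le\gamma_0$ by $N/\gamma_0$ and insert $\sigma_i^2=\gamma_0/(N\lambda_i)$ to obtain the crucial estimate
\begin{align}
\sum_{i=1}^{N}\frac{(Z_i+\alpha_i)^2}{\sigma_i^2}\le N.
\end{align}
Since $-\sum_i Z_i^2\le 0$, this gives $\mathcal{L}^2\le e^{N}\prod_{i}\sigma_i^2$ on the event, so that, bounding $\mathbb{P}^*(S_N\le\gamma_0)\le 1$,
\begin{align}
\mathbb{E}^*[\mathbbm{1}_{(S_N\le\gamma_0)}\mathcal{L}^2]\le e^{N}\prod_{i=1}^{N}\sigma_i^2.
\end{align}

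Finally I would combine this with Proposition~\ref{prop1} written as $P^2\ge\prod_{i}[1-Q_{\frac{1}{2}}(\alpha_i,\sigma_i)]^2$ (noting $\sqrt{\gamma_0/(N\lambda_i)}=\sigma_i$), which yields the factorized bound
\begin{align}
\frac{\mathbb{E}^*[\mathbbm{1}_{(S_N\le\gamma_0)}\mathcal{L}^2]}{P^2}\le\prod_{i=1}^{N}\frac{e\,\sigma_i^2}{[1-Q_{\frac{1}{2}}(\alpha_i,\sigma_i)]^2}.
\end{align}
Because each $\sigma_i\to 0$ as $\gamma_0\to 0$, it remains to pass to the $\limsup$ factor by factor. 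I expect this last step to be the main obstacle: it requires the small-argument behaviour of the non-central chi-squared CDF, i.e. a sharp lower bound on the Marcum-Q term obtained from $1-Q_{\frac{1}{2}}(\alpha_i,\sigma_i)=\mathbb{P}(|Z_i+\alpha_i|\le\sigma_i)=\int_{-\sigma_i}^{\sigma_i}\tfrac{1}{\sqrt{2\pi}}e^{-(\alpha_i+s)^2/2}\,ds$, estimated on the short interval so as to extract a factor of order $\alpha_i\sigma_i$. Feeding a lower bound of the form $1-Q_{\frac{1}{2}}(\alpha_i,\sigma_i)\gtrsim \alpha_i\sigma_i/\sqrt{\pi}$ into the product cancels the $\sigma_i^2$ and leaves the advertised constant $\prod_i \frac{\pi e}{\alpha_i^2}$, which is finite precisely in the non-central regime $\alpha_i\neq 0$. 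The delicate points are justifying the interchange of the $\limsup$ with the finite product and pinning down the exact constant in the Gaussian-integral estimate; the likelihood-ratio bound driving the argument is elementary by comparison.
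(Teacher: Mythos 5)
Your overall route is the same as the paper's: bound the likelihood ratio pointwise on the event $\{S_N\le\gamma_0\}$ using the identity $\sum_i(Z_i+\alpha_i)^2/\sigma_i^2=\tfrac{N}{\gamma_0}\sum_i\lambda_i(Z_i+\alpha_i)^2\le N$ together with dropping the $-\sum_i Z_i^2$ term, conclude $\mathbb{E}^*[\mathbbm{1}_{(S_N\le\gamma_0)}\mathcal{L}^2]\le e^N\prod_i\sigma_i^2$, and then divide by the square of the lower bound from Proposition~\ref{prop1} evaluated asymptotically as $\gamma_0\to0$. Everything up to and including the bound $\prod_i e\,\sigma_i^2/[1-Q_{\frac12}(\alpha_i,\sigma_i)]^2$ is correct and coincides with equations (\ref{1}) and (\ref{2}) of the paper's argument; the ``interchange of the limsup with the finite product'' you worry about is a non-issue since $N$ is fixed.

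The genuine gap is in the last step. The Gaussian integral you propose to use does \emph{not} produce a factor of order $\alpha_i\sigma_i$: as $\sigma_i\to0$,
\begin{align}
1-Q_{\frac12}(\alpha_i,\sigma_i)=\int_{-\sigma_i}^{\sigma_i}\tfrac{1}{\sqrt{2\pi}}e^{-(\alpha_i+s)^2/2}\,ds\;\sim\;\sigma_i\sqrt{\tfrac{2}{\pi}}\,e^{-\alpha_i^2/2},
\end{align}
i.e.\ the small-interval width $2\sigma_i$ times the density at $-\alpha_i$; there is no $\alpha_i$ prefactor, and indeed your claimed lower bound $\gtrsim\alpha_i\sigma_i/\sqrt{\pi}$ already fails for $\alpha_i=1$. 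Feeding the correct asymptotic into your factorized bound gives $\prod_i\tfrac{\pi e}{2}e^{\alpha_i^2}$, which is finite and therefore still establishes bounded relative error, but it is not the constant $\prod_i\pi e/\alpha_i^2$ stated in the proposition. The paper obtains that constant by invoking the expansion $Q_{\nu}(a,b)\sim1-\tfrac{1}{\Gamma(\nu+1)}(b^2/2)^{\nu}(a^2/2)^{1-\nu}$ quoted from \citep[Eq.(8)]{TSAY}, which for $\nu=\tfrac12$ yields $1-Q_{\frac12}(a,b)\sim ab/\sqrt{\pi}$ — a statement your own first-principles computation would contradict. So either cite and use that expansion as the paper does (accepting the constant on that authority), or carry out the Gaussian estimate honestly and report the modified constant; as written, your plan asserts the former's conclusion while performing the latter's calculation.
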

\begin{proof}
We recall the definition of the likelihood ratio,
\begin{align}
\nonumber &\mathcal{L}(Z_1,\dots,Z_N) = \prod_{i=1}^{N}{\frac{f_{Z_i}(Z_i)}{f_{Z_i}^*(Z_i)}} \\
&= \left(\prod_{i=1}^{N}{\sigma_i}\right) \exp\left(\frac{1}{2} \sum_{i=1}^{N}{\left(\frac{Z_i+\alpha_i}{\sigma_i}\right)^2}-\frac{1}{2} \sum_{i=1}^{N}{Z_i^2}\right).
\end{align}
A trivial upper bound for the likelihood ratio is given by
\begin{align}
\mathcal{L}(Z_1,\dots,Z_N) \leq \left(\prod_{i=1}^{N}{\sigma_i}\right) \exp\left(\frac{1}{2} \sum_{i=1}^{N}{\left(\frac{Z_i+\alpha_i}{\sigma_i}\right)^2}\right).
\end{align}
With the choice of $\sigma_i$ given in (\ref{choice}), we get
\begin{align}
\nonumber \mathcal{L}(Z_1,\dots,Z_N) &\leq \left(\frac{\gamma_0}{N}\right)^{\frac{N}{2}} \left(\prod_{i=1}^{N}{\frac{1}{\sqrt{\lambda_i}}}\right) \times \\
&\exp\left(\frac{N}{2 \gamma_0} \sum_{i=1}^{N}{\lambda_i \left(Z_i+\alpha_i\right)^2}\right).
\end{align}
Using the above upper bound of the likelihood ratio, we write
\begin{align}
\nonumber &\mathbbm{1}_{\left(\sum\limits_{i=1}^{N}{\lambda_i (Z_i+\alpha_i)^2} \leq \gamma_0\right)} \mathcal{L}(Z_1,\dots,Z_N) \\
&\leq \left(\frac{\gamma_0}{N}\right)^{\frac{N}{2}} \left(\prod_{i=1}^{N}{\frac{1}{\sqrt{\lambda_i}}}\right) e^{\frac{N}{2}}.
\end{align}
Thus, we obtain the upper bound
\begin{align}\label{1}
\mathbb{E}^*[\mathbbm{1}_{\left(S_N \leq \gamma_0\right)} \mathcal{L}^2(Z_1,\dots,Z_N)] \leq \left(\frac{\gamma_0}{N}\right)^{N} \left(\prod_{i=1}^{N}{\frac{1}{\lambda_i}}\right) e^{N}.
\end{align}
From Proposition \ref{prop1}, we have 
\begin{align}
P \geq  \prod_{i=1}^{N}{\left[1-Q_{\frac{1}{2}}\left(\alpha_i,\sqrt{\frac{\gamma_0}{N \lambda_i}}\right)\right]}.
\end{align}
Using \citep[Eq.(8)]{TSAY}, we have the  asymptotic expansion around $b=0$ of $Q_{\nu}(a,b)$, i.e.,
\begin{align}
Q_{\nu}(a,b) \underset{b \rightarrow 0}{\sim} 1-\frac{1}{\Gamma(\nu+1)} \left(\frac{b^2}{2}\right)^{\nu} \left(\frac{a^2}{2}\right)^{1-\nu}.
\end{align}
Therefore, as $\gamma_0 \rightarrow 0$, we have
\begin{align}
P \geq  \left(\frac{1}{N \pi}\right)^{\frac{N}{2}} \gamma_0^{\frac{N}{2}} \left(\prod_{i=1}^{N}{\frac{\alpha_i}{\sqrt{\lambda_i}}}\right),
\end{align}
and we can write
\begin{align}\label{2}
\frac{1}{P^2} \leq \left(N \pi\right)^N \gamma_0^{-N} \left(\prod_{i=1}^{N}{\frac{\lambda_i}{\alpha_i^{2}}}\right).
\end{align}
By combining (\ref{1}) and (\ref{2}), we obtain
\begin{align}
\nonumber &\underset{\gamma_0 \rightarrow 0} \limsup \frac{\mathbb{E}^*[\mathbbm{1}_{\left(S_N \leq \gamma_0\right)} \mathcal{L}^2(Z_1,\dots,Z_N)]}{P^2} \\
&\leq \pi^N \left(\prod_{i=1}^{N}{\frac{1}{\alpha_i^{2}}}\right) e^{N} < +\infty.
\end{align}
\end{proof}
\begin{remark}
If $\mu$ is zero, then $\alpha = Q \Sigma^{-\frac{1}{2}} \mu$ is also zero. In this case, we use the same IS scheme, i.e., we introduce the biased PDF to be Gaussian with zero mean and variance $\sigma_i^2$. In the proof, we use the following lower bound, derived using similar reasoning but involving the CDFs of central Chi-squared RVs
\begin{align}
P = \mathbb{P}(X^{T} \Sigma X \leq \gamma_0) \geq \frac{1}{\pi^{\frac{N}{2}}} \prod_{i=1}^{N}{\gamma\left(\frac{1}{2}, \frac{\gamma_0}{ 2 N \lambda_i}\right)},
\end{align}
where $\gamma(\cdot,\cdot)$ is the lower incomplete Gamma function defined as \cite[Eq. (8.350.1)]{TII}
\begin{align}
\gamma(a, x) = \int_{0}^{x}{e^{-t} t^{a-1} dt}.
\end{align}
As $\gamma_0 \rightarrow 0$, we have the upper bound
\begin{align}
\frac{1}{P^2} \leq \left(\frac{N \pi}{2}\right)^N \left(\prod_{i=1}^{N}{\lambda_i}\right) \gamma_0^{-N}.
\end{align}
The proposed IS estimator also has the bounded relative error property in this case, since
\begin{align}
\underset{\gamma_0 \rightarrow 0} \limsup \frac{\mathbb{E}^*[\mathbbm{1}_{\left(S_N \leq \gamma_0\right)} \mathcal{L}^2(Z_1,\dots,Z_N)]}{P^2} \leq \left(\frac{\pi}{2}\right)^N e^{N} < +\infty.
\end{align}
\end{remark}
\subsection{Complex Valued Case}
In this section, we briefly show how the proposed approach is still valid even if we consider the complex case for which the probability is
\begin{align}
P = \mathbb{P}(X^{*} \Sigma X \leq \gamma_0),
\end{align}
where now $X$ is a complex Gaussian random vector, $X^{*}$ is its conjugate transpose, and $\Sigma$ is a Hermitian positive definite matrix. The complex setting is of paramount importance in many applications involving wireless techniques \cite{Turin1960}, \cite{Slichenko2014}, \cite{Al-Naffouri2016}, \cite{Biyari1993}, \cite{Pablo2018}. Using similar arguments, we write
\begin{align}
P = \mathbb{P}\left(S_N = \sum_{i=1}^{N}{\lambda_i |Z_i + \alpha_i|^2} \leq \gamma_0 \right),
\end{align}
where $|.|$ is the module of a complex number. Using the fact that the random variable $|Z_i + \alpha_i|^2$ has a central Chi-squared distribution with 2 degrees of freedom, we obtain the bound
\begin{align}
\frac{1}{P^2} \leq (2 N)^{2N} \left(\prod_{i=1}^{N}{\lambda_i^2}\right) \gamma_0^{-2N}.
\end{align}
We write both the original and biased PDF of $Z_i$, $i=1,\dots,N$, in the complex scenario as
\begin{align}
f_{Z_i}(z) = \frac{1}{\pi} e^{-|z|^2}, ~ f^{*}_{Z_i}(z) & = \frac{1}{\pi \sigma_i^2} e^{-\frac{|z+\alpha_i|^2}{\sigma_i^2}}.
\end{align}
Using a similar manipulation to the real-valued case, and using the same expression of $\sigma$ as in (\ref{choice}), we get
\begin{align}
\mathbb{E}^*[\mathbbm{1}_{\left(S_N \leq \gamma_0\right)} \mathcal{L}^2(Z_1,\dots,Z_N)] \leq \left(\frac{\gamma_0}{N}\right)^{2N} e^{2N} \left(\prod_{i=1}^{N}{\frac{1}{\lambda_i^2}}\right).
\end{align}
Thus, we can say that the proposed IS estimator maintains the bounded relative error in the complex-valued case since
\begin{align}
\underset{\gamma_0 \rightarrow 0} \limsup \frac{\mathbb{E}^*[\mathbbm{1}_{\left(S_N \leq \gamma_0\right)} \mathcal{L}^2(Z_1,\dots,Z_N)]}{P^2} \leq 2^{2N} e^{2N} < +\infty.
\end{align}
\begin{remark}
The upper bound for the relative error of both the real-valued and complex-valued cases suffers from an exponential deterioration w.r.t. $N$, i.e. the size of the Gaussian RV $X$.
\end{remark}
\section{Numerical Examples}\label{section5}
To show the accuracy and efficiency of the proposed IS scheme compared to naive MC, we consider three examples. The first example is a toy example where we compute the probability $P$ for a given scenario. In this example, we consider real Gaussian random vectors and the non-zero mean case. The second example is inspired by the wireless communication field. We estimate the outage probability of diversity receivers over correlated Gamma fading channels. This case corresponds to the real case, but with zero-mean Gaussian random vectors. In the third example, we show how the IS approach can be extended to the complex case by estimating the outage probability of diversity receivers over correlated Rician fading channels.
\subsection{Example 1: Toy Example}
\subsubsection{Problem Setup}
In this example, we compute the probability $P$ when the matrix $\Sigma$ is defined as
\[
\Sigma=
  \begin{pmatrix}
    \phantom{-}17 & -5 & \phantom{-}9  \\
    -5 & \phantom{-}18 & -5  \\
    \phantom{-}9 & -5 & \phantom{-}18 
  \end{pmatrix}
\].
The mean and covariance matrix of the $(3 \times 3)$ random vector $X$ are given by
\[
\mu =
\begin{pmatrix}
    1   \\
    2   \\
    0  
  \end{pmatrix}
,
\hspace{0.3cm}
\Sigma_X=
  \begin{pmatrix}
    2 & 1 & 0  \\
    1 & 2 & 1  \\
    0 & 1 & 2 
  \end{pmatrix}
\].
To have evaluate the efficiency of the proposed IS scheme, we introduce a metric that measures the improvement in terms of the number of simulation runs. The efficiency indicator, $\xi$, is defined as
\begin{align}
\xi = \frac{\mathbb{V}[\hat{P}_{MC}]}{\mathbb{V^*}[\hat{P}_{IS}]} = \frac{M}{M^*}.
\end{align}
For a fixed number of simulation runs $M = M^*$, this metric can also be interpreted as a measure of variance reduction. 
\subsubsection{Results and Discussions}
In Table \ref{tab:table1}, we provide a comparison between the efficiencies of the MC and IS estimators of $P$. For a specific range of the threshold $\gamma_0$, we compute the MC and IS estimates, their relative errors, and the efficiency indicator, for the same number of simulation runs $M = M^* = 5 \times 10^6$.
\begin{table*}[!htp]
\small
\center
\caption {\label{tab:table1} Comparison of the efficiency of MC and IS estimators of $P$ as a function of the threshold $\gamma_0$ using $M = M^* = 5 \times 10^6$} 
\begin{tabular}{ccccccccc}
\hline
\hline
  & \multicolumn{8}{c}{$\gamma_0$}    \\ 
\cline{2-9} 
        & $10^{-2}$ & $2.3 \times 10^{-2}$   & $3.6  \times 10^{-2}$  &  $4.9 \times 10^{-2}$ &  $6.1 \times 10^{-2}$   & $7.4 \times 10^{-2}$  & $8.7 \times 10^{-2}$ & $10^{-1}$                               \\    \hline
$\hat{P}_{MC} (\times 10^{-5})$ & $0.02$ &  $ 0.2 $   & $ 0.34 $  &  $ 0.58 $  &  $ 0.8 $   & $ 1.14 $   & $ 1.28 $ & $ 1.62 $ \\
$\hat{P}_{IS} (\times 10^{-5})$ & $ 0.06$ &  $0.19$   & $0.38 $  &  $ 0.59 $  &  $ 0.85 $   & $ 1.13 $   & $ 1.44 $ & $ 1.76 $ \\
$\varepsilon (\times 10^{-7})$ & $6.55$  &  $12.18$ &   $17.02$  &  $21.43$   & $25.55$ &  $29.47$  &  $33.22$ &   $36.82$ \\
$\varepsilon^* (\times 10^{-8})$ & $0.05$  &  $0.16 $  & $0.32$ &   $0.50$  &  $0.71$  & $0.95$  & $ 1.21 $ &  $1.48 $\\
$\xi (\times 10^{5})$ & $7.01$ &   $5.86$ &  $2.61$ &   $1.77$ &   $1.21$  & $ 0.974$   & $0.68$  &  $0.57$\\
 \hline\hline
\end{tabular}
\end{table*}

For relatively high values of $P$, the IS and MC estimates match. However, as the threshold becomes smaller, i.e., the probability of interest becomes smaller, the MC estimate becomes less accurate unlike the proposed IS scheme. While both estimates are built using the same number of simulation runs, the IS relative error remains bounded no matter how small the probability $P$ becomes. On the other hand, MC relative error tends to increase as the threshold decreases, which is consistent with the observation that the MC method becomes less accurate when estimating rare events. Finally, we quantify the gain in terms of the number of simulation runs (or, equivalently, in terms of variance reduction) by considering the efficiency indicator. We observe that the smaller the probability becomes, the larger the gain (or, equivalently, the smaller the reduction) becomes, reflecting the efficiency of the proposed scheme compared to naive MC. In fact, when the probability of interest is of the order of $1.7 \times 10^{-5}$, a gain in terms of the number of simulation runs of the order of $5.6 \times 10^4$ is achieved. This gain tends to increase as the probability becomes smaller since naive MC requires more samples to estimate the probability than the proposed IS estimator. In contrast, the bounded relative error property of the IS estimator causes the number of required simulation runs to remain almost constant no matter how small the probability becomes.
\subsection{Example 2: Maximum Ratio Combining Over Correlated Nakagami-m Fading Channel}
\subsubsection{Problem Setup}
In general, the problem of finding the left tail of quadrature form in Gaussian random vectors has many applications in the wireless communication filed \cite{Holm2004}, \cite{Simon2001}. To combat the attenuation of the received signal in wireless communication systems, different diversity techniques can be used. In fact, there are in particular more or less complex linear combination techniques which make it possible to recover a signal with a good average level. Among these diversity techniques, we find the maximum ratio combining (MRC) technique. The MRC technique improves the average power of the output signal by forming it from the maximum signal of all the branches. The instantaneous signal-to-noise ratio (SNR) expression at the MRC diversity receiver, is given by
\begin{align}
\gamma_{end} = \frac{E_s}{N_0} \sum_{\ell=1}^{L}{R_{\ell}},
\end{align}
where $\frac{E_s}{N_0}$ is the average SNR at each branch and where, for each $\ell=1,\dots,L$, $R_{\ell}$ follows a Gamma distribution with PDF
\begin{align}
f_{R_\ell}(r) = \frac{r^{k-1}}{\theta^k \Gamma(k)} \exp\left(-\frac{r}{\theta}\right),
\end{align}
where $k, \theta > 0$ are respectively the shape and scale parameters, respectively, of the PDF, and $\Gamma(\cdot)$ is the Gamma function defined as \cite[Eq. (8.310.1)]{TII}
\begin{align}
\Gamma(a) = \int_{0}^{\infty}{e^{-t} t^{a-1} dt}, ~ a > 0.
\end{align}
Assuming that the shape parameter $k$ is a multiple of 0.5, then we can write
\begin{align}
R_{\ell} = \sum_{m=1}^{2 k}{X_{\ell,m}^2},
\end{align}
where $X_{\ell,m}$, $\forall m=1,\dots,2k$ are independent zero-mean Gaussian RVs. 

To quantify the quality of a communication system, we compute a metric called the outage probability. Depending on the transmission technique used and the channel over which the signal is transmitted, this metric measures the probability that the instantaneous SNR drops below a certain threshold $\gamma_{th}$, i.e.,
\begin{align}
P_{out} = \mathbb{P}(\gamma_{end} \leq \gamma_{th}).
\end{align}
In the MRC scenario, the outage probability can be written as
\begin{align}
P_{out} = \mathbb{P}\left(\sum_{\ell=1}^{L}{\sum_{m=1}^{2 k}{X_{\ell,m}^2}} \leq \gamma_0 \right),
\end{align}
where $\gamma_0 = \frac{N_0}{E_s} \gamma_{th}$.
To facilitate the modeling of the channel correlation, we introduce the ($2kL \times 1$) vector $X=[X_{1,1},X_{2,1},\dots,X_{L,2k}]^T$. The joint pdf of the Gaussian vector $X$  is
\begin{align}
f_X(X) = \frac{1}{\sqrt{(2\pi)^{2kL} |\Sigma_X|}} \exp\left(-\frac{1}{2} X^{T} \Sigma_X^{-1} X\right). 
\end{align}
Thus, we re-write the outage probability as in (\ref{P}) where $\Sigma$ is the identity matrix of order $2kL \times 2kL$ and $N=2kL$. In other words, the outage probability is given by
\begin{align}
P_{out} = \mathbb{P}\left(X^T X \leq \gamma_0 \right).
\end{align}
\subsubsection{Results and Discussions}
In this section, we discuss the efficiency of the proposed IS estimator for the purpose of computing the outage probability of MRC diversity receivers over correlated Gamma fading channels, and we consider the parameter $k=1.5$. In Fig. \ref{fig21}, we plot the outage probability of $L$-branch MRC diversity receivers over the correlated Gamma fading model as a function of the threshold $\gamma_{th}$ for different numbers of branches $L \in \{2, 3, 4\}$. The number of simulation runs required to construct the naive MC estimator and the proposed IS estimator are $M = 10^7$ and $M^* = 10^4$, respectively. While naive MC presents an accurate estimate for relatively high values of the outage probability, it tends to become erroneous as the outage probability becomes smaller. Although the proposed IS scheme is constructed using fewer of simulation runs, i.e. $M^* = 10^4$ compared to $M = 10^7$, it provides an accurate estimate for the outage probabilities even when the probability is very small. We also observe that the outage probability becomes smaller as $L$ increases, a well-known observation when using diversity techniques.

To compare the efficiency of both methods, we investigate in Fig. \ref{fig22} the number of simulation runs required to achieve a $5\%$ relative error for $L$-branch diversity receivers over the correlated Gamma fading model for the three different numbers of branches. For this fixed accuracy requirement, we can see that the number of required simulation runs by naive MC tends to increase rapidly while it remains bounded for the proposed IS scheme. In fact, for $\gamma_{th} = 5 $ dB and $L=2$ (which corresponds to  a probability of the order of $10^{-3}$), we already observe a reduction of the order of $10^3$ simulation runs for an accuracy requirement is set to a level of $5\%$. In Fig. \ref{fig23}, we plot for a fixed number of branches $L=3$, the plot of the outage probability estimates along with its error bars. As in the previous example, the error bars seem to have the same magnitude for the proposed IS scheme, and they are relatively small. The error bars of the naive MC tend to increase as the probability becomes smaller, indicating that the naive MC become less accurate as the events become more rare.
\begin{figure}[H]
\centering
\setlength\figureheight{0.3\textwidth}
\setlength\figurewidth{0.35\textwidth}
\scalefont{0.7}
\begin{tikzpicture}

\begin{axis}[%
width=\figurewidth,
height=\figureheight,
at={(0\figurewidth,0\figureheight)},
scale only axis,
every outer x axis line/.append style={darkgray!60!black},
every x tick label/.append style={font=\color{darkgray!60!black}},
xmin=-5,
xmax=10,
xlabel={$\gamma{}_{\text{th}}\text{(dB)}$},
xmajorgrids,
every outer y axis line/.append style={darkgray!60!black},
every y tick label/.append style={font=\color{darkgray!60!black}},
ymode=log,
ymin=1e-14,
ymax=1,
yminorticks=true,
ymajorgrids,
yminorgrids,
grid style={dotted},
ylabel={Outage Probability},
legend style={at={(0.018574218750001,0.824593784825617)},anchor=south west,draw=black,fill=white,align=left}
]
\addplot [
color=blue,
solid,
line width=1.0pt,
mark size=1.5pt,
mark=o,
mark options={solid},
]
  table[row sep=crcr]{%
-5	5e-07\\
-4	1e-06\\
-3	2e-06\\
-2	4.1e-06\\
-1	8.1e-06\\
0	2.02e-05\\
1	4.26e-05\\
2	8.32e-05\\
3	0.0001624\\
4	0.0003135\\
5	0.0006137\\
6	0.0011804\\
7	0.0022506\\
8	0.0042778\\
9	0.0079966\\
10	0.014798\\
};
\addlegendentry{Naive MC};
\addplot [
color=red,
solid,
line width=1.0pt,
mark size=1.5pt,
mark=asterisk,
mark options={solid},
]
  table[row sep=crcr]{%
-5	6.67388767460473e-07\\
-4	1.32744118471057e-06\\
-3	2.63814742690779e-06\\
-2	5.2376825842133e-06\\
-1	1.03853508952747e-05\\
0	2.05589634527906e-05\\
1	4.06161352124831e-05\\
2	8.00361991535461e-05\\
3	0.000157210004486492\\
4	0.000307555173198044\\
5	0.000598643735402039\\
6	0.00115786573072826\\
7	0.00222175422015217\\
8	0.00422100254043595\\
9	0.00792032554194431\\
10	0.0146336622777613\\
};
\addlegendentry{Proposed IS};
\addplot [
color=blue,
solid,
line width=1.0pt,
mark size=1.5pt,
mark=o,
mark options={solid},
forget plot
]
  table[row sep=crcr]{%
-5	0\\
-4	0\\
-3	0\\
-2	0\\
-1	0\\
0	0\\
1	0\\
2	5e-07\\
3	9e-07\\
4	1.9e-06\\
5	5.9e-06\\
6	1.15e-05\\
7	3.32e-05\\
8	8.97e-05\\
9	0.0002337\\
10	0.0005997\\
};
\addplot [
color=red,
solid,
line width=1.0pt,
mark size=1.5pt,
mark=asterisk,
mark options={solid},
forget plot
]
  table[row sep=crcr]{%
-5	1.64309365186566e-10\\
-4	4.61482326856602e-10\\
-3	1.29496451210476e-09\\
-2	3.62969307146117e-09\\
-1	1.01593141605601e-08\\
0	2.83845410229236e-08\\
1	7.9126485945568e-08\\
2	2.19954283115376e-07\\
3	6.09251982118009e-07\\
4	1.68003367925761e-06\\
5	4.60677307810558e-06\\
6	1.25431979276872e-05\\
7	3.38508504299658e-05\\
8	9.03445897836258e-05\\
9	0.000237785306155984\\
10	0.00061503361800638\\
};
\addplot [
color=blue,
solid,
line width=1.0pt,
mark size=1.5pt,
mark=o,
mark options={solid},
forget plot
]
  table[row sep=crcr]{%
-5	0\\
-4	0\\
-3	0\\
-2	0\\
-1	0\\
0	0\\
1	0\\
2	0\\
3	0\\
4	0\\
5	0\\
6	1e-07\\
7	4e-07\\
8	1.3e-06\\
9	4.8e-06\\
10	1.57e-05\\
};
\addplot [
color=red,
solid,
line width=1.0pt,
mark size=1.5pt,
mark=asterisk,
mark options={solid},
forget plot
]
  table[row sep=crcr]{%
-5	2.41686281859341e-14\\
-4	9.58689691990772e-14\\
-3	3.79924043226093e-13\\
-2	1.50384407200875e-12\\
-1	5.94379170666562e-12\\
0	2.34483515397302e-11\\
1	9.22867213122575e-11\\
2	3.62143286089067e-10\\
3	1.41580763674747e-09\\
4	5.50926573548693e-09\\
5	2.13120417389961e-08\\
6	8.18352855216979e-08\\
7	3.11325871126616e-07\\
8	1.17061637789828e-06\\
9	4.3375440142031e-06\\
10	1.57791885408414e-05\\
};
\node[black,draw,below] at (axis cs:-3.3,5.0e-05){{$L = 2$}};
\node[black,draw,below] at (axis cs:1,8.0e-06){{$L = 3$}};
\node[black,draw,below] at (axis cs:7,5.0e-08){{$L = 4$}};
\draw (axis cs:-4,1.0e-06) ellipse (0.1cm and 0.2cm);
\draw (axis cs:3,1.0e-06) ellipse (0.1cm and 0.2cm);
\draw (axis cs:7,5.0e-07) ellipse (0.1cm and 0.2cm);
\end{axis}
\end{tikzpicture}%
\caption{Outage probability of $L$-branch MRC diversity receivers over the correlated Nakagami-m fading model with $E_s/N_0 = 10$ dB. Number of samples $M = 10^7$ and $M^* = 10^4$.}
\label{fig21}
\end{figure}
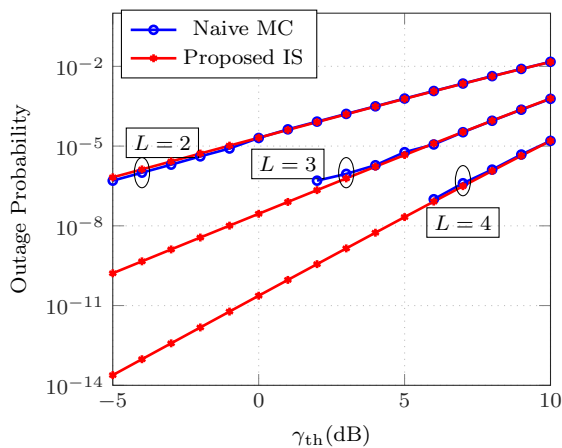
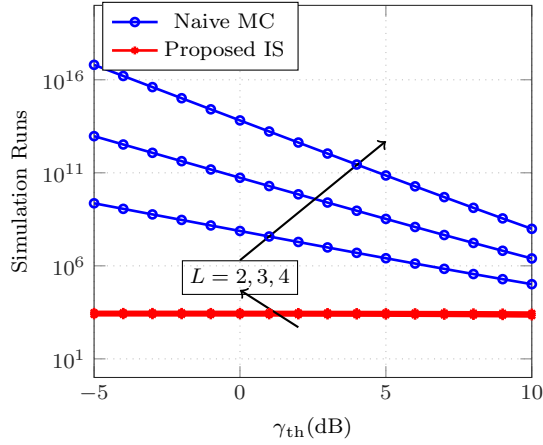
\begin{figure}[H]
\centering
\setlength\figureheight{0.3\textwidth}
\setlength\figurewidth{0.35\textwidth}
\scalefont{0.7}
\begin{tikzpicture}

\begin{axis}[%
width=\figurewidth,
height=\figureheight,
at={(0\figurewidth,0\figureheight)},
scale only axis,
every outer x axis line/.append style={darkgray!60!black},
every x tick label/.append style={font=\color{darkgray!60!black}},
xmin=-5,
xmax=10,
xlabel={$\gamma{}_{\text{th}}\text{(dB)}$},
xmajorgrids,
every outer y axis line/.append style={darkgray!60!black},
every y tick label/.append style={font=\color{darkgray!60!black}},
ymode=log,
ymin=1,
ymax=1e+20,
yminorticks=true,
ymajorgrids,
yminorgrids,
grid style={dotted},
ylabel={Simulation Runs},
legend style={at={(0.018574218750001,0.824593784825617)},anchor=south west,draw=black,fill=white,align=left}
]
\addplot [
color=blue,
solid,
line width=1.0pt,
mark size=1.5pt,
mark=o,
mark options={solid},
]
  table[row sep=crcr]{%
-5	2302464544.48266\\
-4	1157594007.10158\\
-3	582467806.933084\\
-2	293380121.238911\\
-1	147960724.384725\\
0	74741531.1964974\\
1	37831703.5725676\\
2	19197775.874229\\
3	9772904.9104551\\
4	4994770.14951548\\
5	2565332.28241179\\
6	1325594.78743546\\
7	690097.018692794\\
8	362509.575390643\\
9	192475.587384126\\
10	103470.566728778\\
};
\addlegendentry{Naive MC};
\addplot [
color=red,
solid,
line width=1.0pt,
mark size=1.5pt,
mark=asterisk,
mark options={solid},
]
  table[row sep=crcr]{%
-5	2205.99177957114\\
-4	2203.26642054436\\
-3	2199.8401228364\\
-2	2195.53416223362\\
-1	2190.12515888733\\
0	2183.33446519261\\
1	2174.81536431794\\
2	2164.13784178662\\
3	2150.77082641783\\
4	2134.06208295503\\
5	2113.21648848399\\
6	2087.27442520976\\
7	2055.09378750555\\
8	2015.34216338343\\
9	1966.51102256478\\
10	1906.97282156547\\
};
\addlegendentry{Proposed IS};
\addplot [
color=blue,
solid,
line width=1.0pt,
mark size=1.5pt,
mark=o,
mark options={solid},
forget plot
]
  table[row sep=crcr]{%
-5	9352114518869.48\\
-4	3329791651519.41\\
-3	1186627111126.42\\
-2	423352598737.468\\
-1	151254303204.27\\
0	54136508853.2583\\
1	19420044502.7743\\
2	6986177492.18101\\
3	2522173269.68235\\
4	914646793.909557\\
5	333559499.240652\\
6	122506296.598292\\
7	45392891.5791422\\
8	17007118.8166271\\
9	6460763.42740613\\
10	2496928.41135929\\
};
\addplot [
color=red,
solid,
line width=1.0pt,
mark size=1.5pt,
mark=asterisk,
mark options={solid},
forget plot
]
  table[row sep=crcr]{%
-5	2766.36264690286\\
-4	2763.78771072396\\
-3	2760.54911564524\\
-2	2756.47680997524\\
-1	2751.35775439545\\
0	2744.92539969222\\
1	2736.84679413649\\
2	2726.7069122929\\
3	2713.98981019635\\
4	2698.05629663531\\
5	2678.11802459224\\
6	2653.20834590915\\
7	2622.15109249284\\
8	2583.52990595167\\
9	2535.66326153313\\
10	2476.59464726647\\
};
\addplot [
color=blue,
solid,
line width=1.0pt,
mark size=1.5pt,
mark=o,
mark options={solid},
forget plot
]
  table[row sep=crcr]{%
-5	6.35799428986322e+16\\
-4	1.60285440934379e+16\\
-3	4.04459793318466e+15\\
-2	1.02180806414666e+15\\
-1	258528574994916\\
0	65532964966016.1\\
1	16650716137794.9\\
2	4243182349280.35\\
3	1085345182453.27\\
4	278919200000.874\\
5	72101959355.6525\\
6	18777228727.8353\\
7	4935791285.32901\\
8	1312674442.45306\\
9	354263456.399457\\
10	97382431.8716062\\
};
\addplot [
color=red,
solid,
line width=1.0pt,
mark size=1.5pt,
mark=asterisk,
mark options={solid},
forget plot
]
  table[row sep=crcr]{%
-5	3358.00859695837\\
-4	3355.58610779275\\
-3	3352.53830148766\\
-2	3348.70439396138\\
-1	3343.88262829971\\
0	3337.82005028872\\
1	3330.19986089029\\
2	3320.62584896628\\
3	3308.60335492539\\
4	3293.51619381899\\
5	3274.59900805483\\
6	3250.9046823337\\
7	3221.26683198085\\
8	3184.25812912543\\
9	3138.14662303114\\
10	3080.8546838041\\
};
\draw[thick,arrows={->}] (axis cs:0,2e06) -- (axis cs:5,5.0e12);
\draw[thick,arrows={->}] (axis cs:2,5e02) -- (axis cs:0,5.0e04);
\node[black,draw,below] at (axis cs:0,2.0e06){{$L=2,3,4$}};
\end{axis}
\end{tikzpicture}%
\caption{Number of required simulation runs for $5\%$ relative error for $L$-branch MRC diversity receivers over correlated Nakagami-m fading model with  $E_s/N_0 = 10$ dB.}
\label{fig22}
\end{figure}
\begin{figure}[H]
\centering
\setlength\figureheight{0.3\textwidth}
\setlength\figurewidth{0.35\textwidth}
\scalefont{0.7}
\begin{tikzpicture}

\begin{axis}[%
width=\figurewidth,
height=\figureheight,
at={(0\figurewidth,0\figureheight)},
scale only axis,
every outer x axis line/.append style={darkgray!60!black},
every x tick label/.append style={font=\color{darkgray!60!black}},
xmin=-6,
xmax=12,
xmajorgrids,
xlabel={$\gamma{}_{\text{th}}\text{(dB)}$},
xmajorgrids,
every outer y axis line/.append style={darkgray!60!black},
every y tick label/.append style={font=\color{darkgray!60!black}},
ymode=log,
ymin=1e-10,
ymax=0.001,
yminorticks=true,
ymajorgrids,
yminorgrids,
grid style={dotted},
ylabel={Outage Probability},
legend style={at={(0.018574218750001,0.824593784825617)},anchor=south west,draw=black,fill=white,align=left}
]
\addplot [
color=red,
solid,
line width=1.0pt,
]
 plot [error bars/.cd, y dir = both, y explicit]
 table[row sep=crcr, y error plus index=2, y error minus index=3]{%
-5	1.64309365186566e-10	4.3210259851127e-12	4.3210259851127e-12\\
-4	4.61482326856602e-10	1.21304641528047e-11	1.21304641528047e-11\\
-3	1.29496451210476e-09	3.40193185998779e-11	3.40193185998779e-11\\
-2	3.62969307146117e-09	9.52833632861424e-11	9.52833632861424e-11\\
-1	1.01593141605601e-08	2.66445216074701e-10	2.66445216074701e-10\\
0	2.83845410229236e-08	7.43561942610096e-10	7.43561942610096e-10\\
1	7.9126485945568e-08	2.06974635569272e-09	2.06974635569272e-09\\
2	2.19954283115376e-07	5.74277309698781e-09	5.74277309698781e-09\\
3	6.09251982118009e-07	1.58697856726639e-08	1.58697856726639e-08\\
4	1.68003367925761e-06	4.36328414527261e-08	4.36328414527261e-08\\
5	4.60677307810558e-06	1.1920148908618e-07	1.1920148908618e-07\\
6	1.25431979276872e-05	3.23045694389142e-07	3.23045694389142e-07\\
7	3.38508504299658e-05	8.66699297635327e-07	8.66699297635327e-07\\
8	9.03445897836258e-05	2.29603709976858e-06	2.29603709976858e-06\\
9	0.000237785306155984	5.98688350497137e-06	5.98688350497137e-06\\
10	0.00061503361800638	1.5303695802127e-05	1.5303695802127e-05\\
};
\addlegendentry{Proposed IS};
\addplot [
color=blue,
solid,
line width=1.0pt,
]
 plot [error bars/.cd, y dir = both, y explicit]
 table[row sep=crcr, y error plus index=2, y error minus index=3]{%
-5	0	7.94487795499088e-09	7.94487795499088e-09\\
-4	0	1.33147681393038e-08	1.33147681393038e-08\\
-3	0	2.23041154571517e-08	2.23041154571517e-08\\
-2	0	3.73414365721425e-08	3.73414365721425e-08\\
-1	0	6.24724106167752e-08	6.24724106167752e-08\\
0	0	1.04423201300551e-07	1.04423201300551e-07\\
1	0	1.74348009554521e-07	1.74348009554521e-07\\
2	5e-07	2.906847412851e-07	2.906847412851e-07\\
3	9e-07	4.83787245444608e-07	4.83787245444608e-07\\
4	1.9e-06	8.03368317726648e-07	8.03368317726648e-07\\
5	5.9e-06	1.33031191564382e-06	1.33031191564382e-06\\
6	1.15e-05	2.19511605053367e-06	2.19511605053367e-06\\
7	3.32e-05	3.60606468326431e-06	3.60606468326431e-06\\
8	8.97e-05	5.89097971832289e-06	5.89097971832289e-06\\
9	0.0002337	9.55645761226858e-06	9.55645761226858e-06\\
10	0.0005997	1.53663918956866e-05	1.53663918956866e-05\\
};
\addlegendentry{Naive MC};
\end{axis}
\end{tikzpicture}%
\caption{Error bars of the MC and IS estimators for the outage probability of $3$-branch MRC receivers over the correlated Nakagami-m fading model. Number of samples $M = 10^7$ and $M^* = 10^4$.}
\label{fig23}
\end{figure}
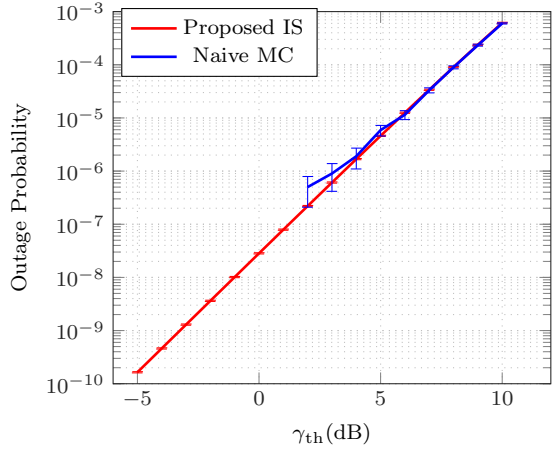
\subsection{Example 3: Maximum Ratio Combining Over Correlated Rician Fading Channel}
\subsubsection{Problem Setup}
In this example, we aim to estimate the outage probability of MRC diversity receivers over correlated Rician fading channels. We assume that the fading at each branch follows a Rice distribution with factor $K_i$. The instantaneous SNR can be expressed as $ \gamma_{end} = \frac{E_s}{N_0} g^{*}g$, where $g$ is a circularly symmetric complex Gaussian random vector with mean $\bar{g}$ and covariance matrix $\Sigma_g$. The expression of the mean and covariance matrix of $g$ can be expressed in terms of the Rician factors as \cite{Pablo2018}
\begin{align}
\bar{g}_i &= \sqrt{\frac{K_i}{1+K_i}}, \\
{\Sigma_g}_{i,j} &= \sqrt{\frac{1}{K_i K_j}} R_{i,j},
\end{align}
where $R = \left(R_{i,j}\right)_{1 \leq i,j \leq L}$ is the correlation matrix of $g$. In this example, we will assume that the correlation matrix has an exponential structure \cite{Karagiannidis2003}, i.e., $R_{i,j} = \rho^{|i-j|}$, $\forall 1 \leq i,j \leq L$.\\
The outage probability, for a given threshold $\gamma_{th} > 0$, is given by
\begin{align}
P_{out} = \mathbb{P}\left(g^{*} g \leq \frac{N_0}{E_s} \gamma_{th} \right).
\end{align}
We can clearly see that this corresponds to the left tail of a complex Gaussian quadratic form with the matrix $\Sigma = I$.
\subsubsection{Results and Discussions}
We start by plotting the outage probability of MRC diversity receivers over the correlated Rician fading model as a function of the threshold $\gamma_{th}$ for different values of $\rho \in \{0.1, 0.5, 0.8\}$ and for two values of the number of branches $L=2$ (Fig. \ref{fig31}) and $L=4$ (Fig. \ref{fig33}). The case $\rho = 0.1$ indicates a low correlation between branches while $\rho = 0.8$ means a high correlation between branches. In these plots, the solid line corresponds to estimates obtained using our proposed IS method, and the dashed one using naive MC. As we can observe from both plots, we can clearly see that although using fewer simulation runs ($10^{-3}$ less), the proposed IS method accurately estimates the outage probability, unlike naive MC, which fails as the probability becomes smaller (below $5\times 10^{-6}$).

\begin{figure}[H]
\centering
\setlength\figureheight{0.3\textwidth}
\setlength\figurewidth{0.35\textwidth}
\scalefont{0.7}
\begin{tikzpicture}

\begin{axis}[%
width=\figurewidth,
height=\figureheight,
at={(0\figurewidth,0\figureheight)},
scale only axis,
every outer x axis line/.append style={darkgray!60!black},
every x tick label/.append style={font=\color{darkgray!60!black}},
xmin=-25,
xmax=-5,
xlabel={$\gamma{}_{\text{th}}\text{(dB)}$},
xmajorgrids,
every outer y axis line/.append style={darkgray!60!black},
every y tick label/.append style={font=\color{darkgray!60!black}},
ymode=log,
ymin=1e-09,
ymax=0.001,
yminorticks=true,
ymajorgrids,
yminorgrids,
grid style={dotted},
ylabel={Outage Probability},
legend style={at={(0.028574218750001,0.784593784825617)},anchor=south west,draw=black,fill=white,align=left}
]
\addplot [
color=green,
solid,
line width=1.0pt,
mark size=1.5pt,
mark=asterisk,
mark options={solid},
]
  table[row sep=crcr]{%
-25	3.62065551303936e-08\\
-24	5.74011092792909e-08\\
-23	9.10083171581555e-08\\
-22	1.44303174338268e-07\\
-21	2.28829729615598e-07\\
-20	3.62911181816724e-07\\
-19	5.75640574074362e-07\\
-18	9.13229887509031e-07\\
-17	1.44911991081966e-06\\
-16	2.30009508025278e-06\\
-15	3.6520016343079e-06\\
-14	5.80085148581937e-06\\
-13	9.21862671250018e-06\\
-12	1.46588290445781e-05\\
-11	2.33261347316994e-05\\
-10	3.71496888319258e-05\\
-9	5.92242936421249e-05\\
-8	9.45239311953324e-05\\
-7	0.000151056984451317\\
-6	0.000241736153288976\\
-5	0.000387398003492278\\
};
\addlegendentry{$\rho = 0.1$};
\addplot [
color=red,
solid,
line width=1.0pt,
mark size=1.5pt,
mark=asterisk,
mark options={solid},
]
  table[row sep=crcr]{%
-25	1.96250752503902e-08\\
-24	3.1109829965078e-08\\
-23	4.93172041358277e-08\\
-22	7.818375938062e-08\\
-21	1.23952510176278e-07\\
-20	1.96525494171041e-07\\
-19	3.1161064176095e-07\\
-18	4.94130432344773e-07\\
-17	7.83635929029134e-07\\
-16	1.24290985400266e-06\\
-15	1.97164459883744e-06\\
-14	3.12820320018282e-06\\
-13	4.96426738760364e-06\\
-12	7.88005707910657e-06\\
-11	1.25124346719951e-05\\
-10	1.98756647505992e-05\\
-9	3.1586644244798e-05\\
-8	5.02259210235159e-05\\
-7	7.99174715350084e-05\\
-6	0.000127261662231251\\
-5	0.000202839489229584\\
};
\addlegendentry{$\rho = 0.5$};
\addplot[
color=blue,
solid,
line width=1.0pt,
mark size=1.5pt,
mark=asterisk,
mark options={solid},
]
  table[row sep=crcr]{%
-25	6.22198914691602e-09\\
-24	9.86430682120029e-09\\
-23	1.56398289575514e-08\\
-22	2.47988691640646e-08\\
-21	3.93254836911971e-08\\
-20	6.23689373100456e-08\\
-19	9.89297254654999e-08\\
-18	1.56951147173737e-07\\
-17	2.49057772551143e-07\\
-16	3.95327316105943e-07\\
-15	6.27716394828448e-07\\
-14	9.97139149553848e-07\\
-13	1.58481276178101e-06\\
-12	2.52048864224303e-06\\
-11	4.01183957033059e-06\\
-10	6.39200700169655e-06\\
-9	1.01968750559061e-05\\
-8	1.62913188714262e-05\\
-7	2.60767310460849e-05\\
-6	4.18345428804044e-05\\
-5	6.72993153471543e-05\\
};
\addlegendentry{$\rho = 0.8$};
\addplot[
color=blue,
dashed,
line width=1.0pt,
mark size=1.5pt,
mark=o,
mark options={solid},
forget plot
]
  table[row sep=crcr]{%
-25	0\\
-24	0\\
-23	0\\
-22	0\\
-21	0\\
-20	0\\
-19	0\\
-18	0\\
-17	1e-07\\
-16	1e-07\\
-15	3e-07\\
-14	8e-07\\
-13	1.4e-06\\
-12	2.1e-06\\
-11	3.9e-06\\
-10	6.6e-06\\
-9	9.8e-06\\
-8	1.68e-05\\
-7	2.61e-05\\
-6	4.29e-05\\
-5	6.78e-05\\
};
\addlegendentry{$\rho = 0.8$};
\addplot [
color=red,
dashed,
line width=1.0pt,
mark size=1.5pt,
mark=o,
mark options={solid},
forget plot
]
  table[row sep=crcr]{%
-25	0\\
-24	0\\
-23	0\\
-22	1e-07\\
-21	3e-07\\
-20	6e-07\\
-19	8e-07\\
-18	1e-06\\
-17	1.3e-06\\
-16	1.8e-06\\
-15	3.2e-06\\
-14	4.3e-06\\
-13	5.8e-06\\
-12	8.2e-06\\
-11	1.22e-05\\
-10	1.99e-05\\
-9	3.23e-05\\
-8	5.1e-05\\
-7	8.22e-05\\
-6	0.000132\\
-5	0.0002062\\
};
\addlegendentry{$\rho = 0.5$};
\addplot [
color=green,
dashed,
line width=1.0pt,
mark size=1.5pt,
mark=o,
mark options={solid},
forget plot
]
  table[row sep=crcr]{%
-25	0\\
-24	2e-07\\
-23	2e-07\\
-22	3e-07\\
-21	5e-07\\
-20	7e-07\\
-19	1.1e-06\\
-18	1.3e-06\\
-17	2.2e-06\\
-16	3.1e-06\\
-15	4.8e-06\\
-14	6.7e-06\\
-13	1.05e-05\\
-12	1.59e-05\\
-11	2.56e-05\\
-10	4.09e-05\\
-9	6.22e-05\\
-8	0.0001001\\
-7	0.0001562\\
-6	0.0002459\\
-5	0.0004017\\
};\end{axis}
\end{tikzpicture}%
\caption{Outage probability of $2$-branch MRC diversity receivers over correlated Rician fading model with $E_s/N_0 = 10$ dB. Solid line corresponds to IS and dashed line to MC. Number of samples $M = 10^7$ and $M^* = 10^4$.}
\label{fig31}
\end{figure}
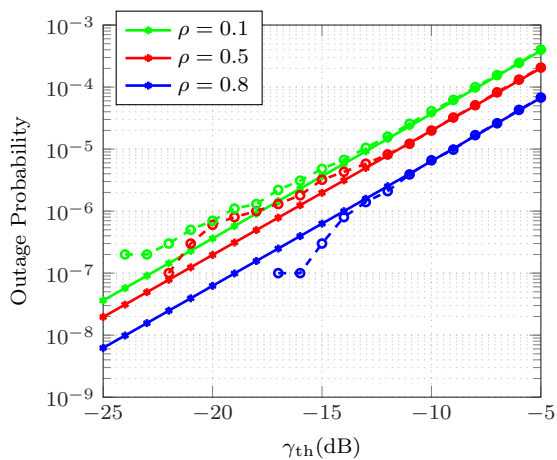
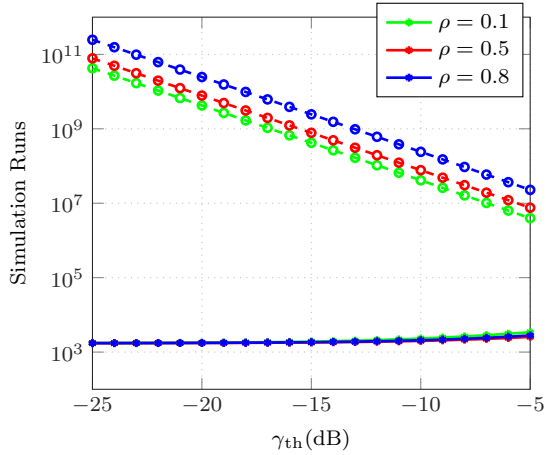
\begin{figure}[H]
\centering
\setlength\figureheight{0.3\textwidth}
\setlength\figurewidth{0.35\textwidth}
\scalefont{0.7}
\begin{tikzpicture}

\begin{axis}[%
width=\figurewidth,
height=\figureheight,
at={(0\figurewidth,0\figureheight)},
scale only axis,
every outer x axis line/.append style={darkgray!60!black},
every x tick label/.append style={font=\color{darkgray!60!black}},
xmin=-25,
xmax=-5,
xlabel={$\gamma{}_{\text{th}}\text{(dB)}$},
xmajorgrids,
every outer y axis line/.append style={darkgray!60!black},
every y tick label/.append style={font=\color{darkgray!60!black}},
ymode=log,
ymin=100,
ymax=1000000000000,
yminorticks=true,
ymajorgrids,
yminorgrids,
grid style={dotted},
ylabel={Simulation Runs},
legend style={at={(0.648574218750001,0.784593784825617)},anchor=south west,draw=black,fill=white,align=left}
]
\addplot [
color=green,
dashed,
line width=1.0pt,
mark size=1.5pt,
mark=o,
mark options={solid},
forget plot
]
  table[row sep=crcr]{%
-25	42440932003.3218\\
-24	26770212824.9557\\
-23	16884609100.9746\\
-22	10648690060.3853\\
-21	6715209824.1449\\
-20	4234202524.82656\\
-19	2669441982.81819\\
-18	1682641597.38117\\
-17	1060393802.99124\\
-16	668075193.401579\\
-15	420764978.23896\\
-14	264897505.122478\\
-13	166687065.461261\\
-12	104825390.212548\\
-11	65874786.9633168\\
-10	41361932.2964998\\
-9	25944572.7266303\\
-8	16255087.270664\\
-7	10171048.2661632\\
-6	6355145.96246978\\
-5	3965030.00760181\\
};
\addplot [
color=green,
solid,
line width=1.0pt,
mark size=1.5pt,
mark=asterisk,
mark options={solid},
]
  table[row sep=crcr]{%
-25	1747.18716264159\\
-24	1752.2355641046\\
-23	1758.58137240632\\
-22	1766.55594124177\\
-21	1776.57408859458\\
-20	1789.15448019323\\
-19	1804.94463522482\\
-18	1824.75140068969\\
-17	1849.57777525085\\
-16	1880.66691338053\\
-15	1919.55394096201\\
-14	1968.12576682932\\
-13	2028.68824198798\\
-12	2104.03861302563\\
-11	2197.53901271354\\
-10	2313.18350395015\\
-9	2455.64682893168\\
-8	2630.29775266716\\
-7	2843.15478914936\\
-6	3100.7598509317\\
-5	3409.95152096314\\
};
\addlegendentry{$\rho = 0.1$};
\addplot [
color=red,
dashed,
line width=1.0pt,
mark size=1.5pt,
mark=o,
mark options={solid},
forget plot
]
  table[row sep=crcr]{%
-25	78299825617.8798\\
-24	49394032494.563\\
-23	31158293563.9468\\
-22	19654208137.7658\\
-21	12397004363.5607\\
-20	7819034901.77048\\
-19	4931280627.91079\\
-18	3109784664.36012\\
-17	1960909063.6483\\
-16	1236323040.76311\\
-15	779368133.180851\\
-14	491219749.723428\\
-13	309538598.896851\\
-12	195002126.985013\\
-11	122807496.156731\\
-10	77311097.6432721\\
-9	48646872.7348918\\
-8	30593024.2697052\\
-7	19226298.90691\\
-6	12073113.1253608\\
-5	7574108.5454439\\
};
\addplot [
color=red,
solid,
line width=1.0pt,
mark size=1.5pt,
mark=asterisk,
mark options={solid},
]
  table[row sep=crcr]{%
-25	1736.98418133821\\
-24	1739.29519224559\\
-23	1742.1887505469\\
-22	1745.81327733157\\
-21	1750.35509732797\\
-20	1756.04807481382\\
-19	1763.18566514174\\
-18	1772.13596797192\\
-17	1783.36049976333\\
-16	1797.43755801294\\
-15	1815.09122878483\\
-14	1837.22729039199\\
-13	1864.9774834794\\
-12	1899.75383805073\\
-11	1943.31494745339\\
-10	1997.84621929556\\
-9	2066.05615576682\\
-8	2151.29054205613\\
-7	2257.66595882734\\
-6	2390.22320487141\\
-5	2555.10002841077\\
};
\addlegendentry{$\rho = 0.5$};
\addplot [
color=blue,
dashed,
line width=1.0pt,
mark size=1.5pt,
mark=o,
mark options={solid},
forget plot
]
  table[row sep=crcr]{%
-25	246969249568.796\\
-24	155777796929.388\\
-23	98251712351.7051\\
-22	61964114239.5223\\
-21	39074915178.09\\
-20	24637904226.6268\\
-19	15532640374.2675\\
-18	9790560862.3657\\
-17	6169811933.78464\\
-16	3887005349.54288\\
-15	2447982955.4967\\
-14	1541047173.24923\\
-13	969601962.938793\\
-12	609658024.70304\\
-11	383024746.704968\\
-10	240398700.655133\\
-9	150695612.395867\\
-8	94321090.7744787\\
-7	58926094.9440482\\
-6	36729831.6073575\\
-5	22831385.09291\\
};
\addplot [
color=blue,
solid,
line width=1.0pt,
mark size=1.5pt,
mark=asterisk,
mark options={solid},
]
  table[row sep=crcr]{%
-25	1739.31081024013\\
-24	1742.25395446698\\
-23	1745.94656779335\\
-22	1750.5805673313\\
-21	1756.39704045089\\
-20	1763.69879776317\\
-19	1772.86610290342\\
-18	1784.37636863003\\
-17	1798.82879743405\\
-16	1816.9751744548\\
-15	1839.75829731013\\
-14	1868.3598569285\\
-13	1904.2599684778\\
-12	1949.31098901098\\
-11	2005.82873414511\\
-10	2076.70468450518\\
-9	2165.54318154457\\
-8	2276.82781708594\\
-7	2416.12098603308\\
-6	2590.29950551606\\
-5	2807.82668792634\\
};
\addlegendentry{$\rho = 0.8$};
\end{axis}
\end{tikzpicture}%
\caption{Number of required simulation runs for $5\%$ relative error for $2$-branch MRC diversity receivers over correlated Rician fading model with $E_s/N_0 = 10$ dB. Solid line corresponds to IS and dashed line to MC.}
\label{fig32}
\end{figure}
For both numbers of branches $L \in \{2, 4\}$, we plot the required number of simulation runs for a $5\%$ accuracy requirement in Fig. \ref{fig32} and Fig. \ref{fig34}, respectively. From these plots, we can confirm again that the proposed IS scheme outperforms naive MC. In fact, for a fixed threshold $\gamma_{th}$, we notice that the number of simulation runs of IS is far less than the one needed by naive MC to achieve the same accuracy. We also note that while the number of samples of naive MC continues to increase at a high rate as the probability becomes smaller, the number of samples required by the IS estimator remains almost constant as a result of the bounded relative error property. 
\begin{figure}[H]
\centering
\setlength\figureheight{0.3\textwidth}
\setlength\figurewidth{0.35\textwidth}
\scalefont{0.7}
\begin{tikzpicture}

\begin{axis}[%
width=\figurewidth,
height=\figureheight,
at={(0\figurewidth,0\figureheight)},
scale only axis,
every outer x axis line/.append style={darkgray!60!black},
every x tick label/.append style={font=\color{darkgray!60!black}},
xmin=-10,
xmax=5,
xlabel={$\gamma{}_{\text{th}}\text{(dB)}$},
xmajorgrids,
every outer y axis line/.append style={darkgray!60!black},
every y tick label/.append style={font=\color{darkgray!60!black}},
ymode=log,
ymin=1e-12,
ymax=0.01,
yminorticks=true,
ymajorgrids,
yminorgrids,
grid style={dotted},
ylabel={Outage Probability},
legend style={at={(0.028574218750001,0.784593784825617)},anchor=south west,draw=black,fill=white,align=left}
]
\addplot [
color=green,
dashed,
line width=1.0pt,
mark size=1.5pt,
mark=asterisk,
mark options={solid},
forget plot
]
  table[row sep=crcr]{%
-10	0\\
-9	1e-07\\
-8	2e-07\\
-7	2e-07\\
-6	2e-07\\
-5	9e-07\\
-4	2.3e-06\\
-3	5.2e-06\\
-2	1.12e-05\\
-1	2.61e-05\\
0	6.24e-05\\
1	0.0001463\\
2	0.0003277\\
3	0.0007136\\
4	0.0014935\\
5	0.0030265\\
};
\addplot [
color=green,
solid,
line width=1.0pt,
mark size=1.5pt,
mark=asterisk,
mark options={solid},
]
  table[row sep=crcr]{%
-10	9.65714964963335e-09\\
-9	2.41237904179929e-08\\
-8	6.01699376120886e-08\\
-7	1.49782313453461e-07\\
-6	3.71911663120003e-07\\
-5	9.20432342913409e-07\\
-4	2.26824647732407e-06\\
-3	5.55874315718557e-06\\
-2	1.12e-05\\
-1	2.61e-05\\
0	6.24e-05\\
1	0.0001463\\
2	0.0003277\\
3	0.0007136\\
4	0.0014935\\
5	0.0030265\\
};
\addlegendentry{$\rho = 0.1$};
\addplot[
color=red,
dashed,
line width=1.0pt,
mark size=1.5pt,
mark=asterisk,
mark options={solid},
forget plot
]
  table[row sep=crcr]{%
-10	0\\
-9	0\\
-8	0\\
-7	0\\
-6	0\\
-5	2e-07\\
-4	4e-07\\
-3	5e-07\\
-2	1.5e-06\\
-1	3e-06\\
0	6.9e-06\\
1	1.69e-05\\
2	4.05e-05\\
3	9.25e-05\\
4	0.0002094\\
5	0.0004745\\
};
\addplot[
color=red,
solid,
line width=1.0pt,
mark size=1.5pt,
mark=o,
mark options={solid},
]
  table[row sep=crcr]{%
-10	7.78373293023172e-10\\
-9	1.94618827660978e-09\\
-8	4.86055982389739e-09\\
-7	1.21218273324297e-08\\
-6	3.01769418247368e-08\\
-5	7.49573799201945e-08\\
-4	1.85671432385381e-07\\
-3	4.58319971790844e-07\\
-2	1.12646124975491e-06\\
-1	2.75380291401249e-06\\
0	6.68747201720957e-06\\
1	1.61074721087067e-05\\
2	3.84074656120456e-05\\
3	9.04609223249168e-05\\
4	0.000209911580707446\\
5	0.00047847112816254\\
};
\addlegendentry{$\rho = 0.5$};
\addplot [
color=blue,
dashed,
line width=1.0pt,
mark size=1.5pt,
mark=o,
mark options={solid},
forget plot
]
  table[row sep=crcr]{%
-10	0\\
-9	0\\
-8	0\\
-7	0\\
-6	0\\
-5	1e-07\\
-4	1e-07\\
-3	1e-07\\
-2	1e-07\\
-1	4e-07\\
0	9e-07\\
1	2.1e-06\\
2	4.2e-06\\
3	7.9e-06\\
4	2.29e-05\\
5	5.15e-05\\
};
\addplot [
color=blue,
solid,
line width=1.0pt,
mark size=1.5pt,
mark=o,
mark options={solid},
]
  table[row sep=crcr]{%
-10	4.95429126626755e-11\\
-9	1.24311550417019e-10\\
-8	3.11860502196412e-10\\
-7	7.8218883042615e-10\\
-6	1.96130199415539e-09\\
-5	4.91623742092375e-09\\
-4	1.23180942953332e-08\\
-3	3.08484793762464e-08\\
-2	7.720544738564e-08\\
-1	1.93069133927769e-07\\
0	4.82315210322479e-07\\
1	1.20329202855523e-06\\
2	2.99678044080943e-06\\
3	7.44627686036533e-06\\
4	1.84454859347364e-05\\
5	4.55040943554115e-05\\
};
\addlegendentry{$\rho = 0.8$};
\end{axis}
\end{tikzpicture}%
\caption{Outage probability of $4$-branch MRC diversity receivers over the correlated Rician fading model with $E_s/N_0 = 10$ dB. Solid line corresponds to IS and dashed line to MC. Number of samples $M = 10^7$ and $M^* = 10^4$.}
\label{fig33}
\end{figure}
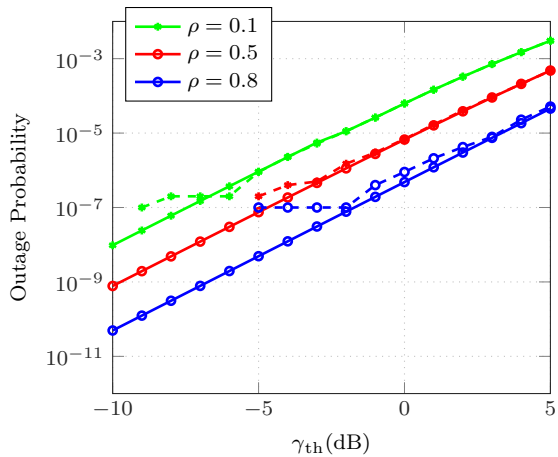
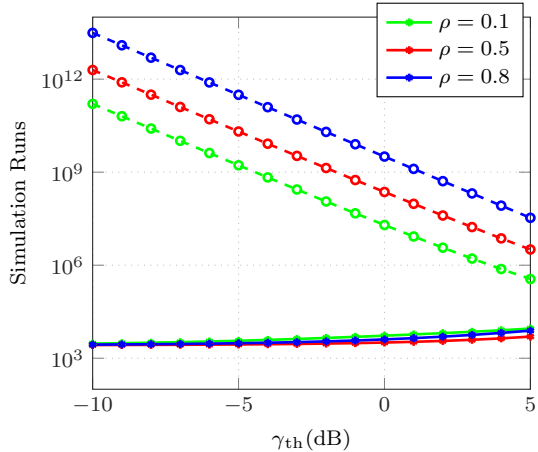
\begin{figure}[H]
\centering
\setlength\figureheight{0.3\textwidth}
\setlength\figurewidth{0.35\textwidth}
\scalefont{0.7}
\begin{tikzpicture}

\begin{axis}[%
width=\figurewidth,
height=\figureheight,
at={(0\figurewidth,0\figureheight)},
scale only axis,
every outer x axis line/.append style={darkgray!60!black},
every x tick label/.append style={font=\color{darkgray!60!black}},
xmin=-10,
xmax=5,
xlabel={$\gamma{}_{\text{th}}\text{(dB)}$},
xmajorgrids,
every outer y axis line/.append style={darkgray!60!black},
every y tick label/.append style={font=\color{darkgray!60!black}},
ymode=log,
ymin=100,
ymax=100000000000000,
yminorticks=true,
ymajorgrids,
yminorgrids,
grid style={dotted},
ylabel={Simulation Runs},
legend style={at={(0.648574218750001,0.784593784825617)},anchor=south west,draw=black,fill=white,align=left}
]
\addplot [
color=green,
dashed,
line width=1.0pt,
mark size=1.5pt,
mark=o,
mark options={solid},
forget plot
]
  table[row sep=crcr]{%
-10	159119413171.647\\
-9	63698114446.5716\\
-8	25538333070.0304\\
-7	10259153663.7999\\
-6	4131732292.59508\\
-5	1669474782.64723\\
-4	677455704.167811\\
-3	276435052.809837\\
-2	113616968.475483\\
-1	47137852.8735796\\
0	19795589.3285916\\
1	8443656.8555216\\
2	3673360.07796469\\
3	1637843.60189043\\
4	752460.895008273\\
5	358174.979666907\\
};
\addplot [
color=green,
solid,
line width=1.0pt,
mark size=1.5pt,
mark=asterisk,
mark options={solid},
]
  table[row sep=crcr]{%
-10	2970.46610902712\\
-9	3058.39646588934\\
-8	3166.96652327147\\
-7	3300.32261812477\\
-6	3463.09220078246\\
-5	3660.27059045655\\
-4	3897.03434275038\\
-3	4178.5065679988\\
-2	4509.56597369524\\
-1	4894.90251262726\\
0	5339.66973740077\\
1	5851.22207344233\\
2	6442.48956224119\\
3	7137.59069682021\\
4	7980.83378291249\\
5	9052.58689072958\\
};
\addlegendentry{$\rho = 0.1$};
\addplot [
color=red,
dashed,
line width=1.0pt,
mark size=1.5pt,
mark=o,
mark options={solid},
forget plot
]
  table[row sep=crcr]{%
-10	1974168451792.16\\
-9	789563895475.83\\
-8	316144651687.248\\
-7	126766364445.905\\
-6	50920996652.1286\\
-5	20500181389.1243\\
-4	8276123553.03204\\
-3	3352765295.65778\\
-2	1364128832.10045\\
-1	558004990.327105\\
0	229777368.758872\\
1	95397666.2659227\\
2	40007351.619424\\
3	16985245.7242203\\
4	7318878.91221108\\
5	3210026.00099177\\
};
\addplot [
color=red,
solid,
line width=1.0pt,
mark size=1.5pt,
mark=asterisk,
mark options={solid},
]
  table[row sep=crcr]{%
-10	2663.74961740967\\
-9	2677.40510924847\\
-8	2694.93761682316\\
-7	2717.42870676342\\
-6	2746.27288984714\\
-5	2783.27849376537\\
-4	2830.80825684745\\
-3	2891.97980533443\\
-2	2970.95879119075\\
-1	3073.39888631391\\
0	3207.11929525087\\
1	3383.17230415246\\
2	3617.55716766011\\
3	3934.00737334906\\
4	4368.55033746947\\
5	4976.94928464954\\
};
\addlegendentry{$\rho = 0.5$};
\addplot [
color=blue,
dashed,
line width=1.0pt,
mark size=1.5pt,
mark=o,
mark options={solid},
forget plot
]
  table[row sep=crcr]{%
-10	31016343556270.9\\
-9	12361200505135\\
-8	4927331254513.9\\
-7	1964538406871.48\\
-6	783479546528.439\\
-5	312564235792.498\\
-4	124746567466.502\\
-3	49812502388.0502\\
-2	19903257262.2438\\
-1	7959012774.6527\\
0	3185964750.78559\\
1	1277028447.38227\\
2	512762087.639713\\
3	206361996.27927\\
4	83305566.541609\\
5	33767732.2965052\\
};
\addplot [
color=blue,
solid,
line width=1.0pt,
mark size=1.5pt,
mark=asterisk,
mark options={solid},
]
  table[row sep=crcr]{%
-10	2747.92654922765\\
-9	2782.81403224219\\
-8	2827.0748018036\\
-7	2883.24852169054\\
-6	2954.58770329731\\
-5	3045.27368977079\\
-4	3160.70724480749\\
-3	3307.90432954139\\
-2	3496.04236260515\\
-1	3737.22484255769\\
0	4047.56715924534\\
1	4448.76105327719\\
2	4970.36202350412\\
3	5653.18615654635\\
4	6554.44735460373\\
5	7755.71700139074\\
};
\addlegendentry{$\rho = 0.8$};
\end{axis}
\end{tikzpicture}%
\caption{Number of required simulation runs for $5\%$ relative error for $4$-branch MRC diversity receivers over the correlated Rician fading model with  $E_s/N_0 = 10$ dB. Solid line corresponds to IS and dashed line to MC.}
\label{fig34}
\end{figure}

In Fig. \ref{fig35}, we plot the outage probability of 2-and 4-branch MRC receivers and their error bars to determine the relative error of both methods. We can clearly see that the magnitude of the error bars is larger for naive MC than for the proposed IS scheme, which indicates that the relative error of naive MC tends to increase despite using more simulation runs compared to our IS estimator.

\begin{figure}[H]
\centering
\setlength\figureheight{0.3\textwidth}
\setlength\figurewidth{0.35\textwidth}
\scalefont{0.7}
\begin{tikzpicture}

\begin{axis}[%
width=\figurewidth,
height=\figureheight,
at={(0\figurewidth,0\figureheight)},
scale only axis,
every outer x axis line/.append style={darkgray!60!black},
every x tick label/.append style={font=\color{darkgray!60!black}},
xmin=-30,
xmax=6,
xmajorgrids,
xlabel={$\gamma{}_{\text{th}}\text{(dB)}$},
xmajorgrids,
every outer y axis line/.append style={darkgray!60!black},
every y tick label/.append style={font=\color{darkgray!60!black}},
ymode=log,
ymin=1e-10,
ymax=0.001,
yminorticks=true,
ymajorgrids,
yminorgrids,
grid style={dotted},
ylabel={Outage Probability},
legend style={at={(0.018574218750001,0.824593784825617)},anchor=south west,draw=black,fill=white,align=left}
]
\addplot [
color=red,
solid,
line width=1.0pt,
]
 plot [error bars/.cd, y dir = both, y explicit]
 table[row sep=crcr, y error plus index=2, y error minus index=3]{%
-25	1.96250752503902e-08	4.08958525875024e-10	4.08958525875024e-10\\
-24	3.1109829965078e-08	6.48715522683297e-10	6.48715522683297e-10\\
-23	4.93172041358277e-08	1.02923857476915e-09	1.02923857476915e-09\\
-22	7.818375938062e-08	1.63337329041999e-09	1.63337329041999e-09\\
-21	1.23952510176278e-07	2.59291578914625e-09	2.59291578914625e-09\\
-20	1.96525494171041e-07	4.11772275144706e-09	4.11772275144706e-09\\
-19	3.1161064176095e-07	6.5423128537625e-09	6.5423128537625e-09\\
-18	4.94130432344773e-07	1.04006414868506e-08	1.04006414868506e-08\\
-17	7.83635929029134e-07	1.65464151154741e-08	1.65464151154741e-08\\
-16	1.24290985400266e-06	2.63473257899105e-08	2.63473257899105e-08\\
-15	1.97164459883744e-06	4.19998623240641e-08	4.19998623240641e-08\\
-14	3.12820320018282e-06	6.70419139438942e-08	6.70419139438942e-08\\
-13	4.96426738760364e-06	1.07191899357318e-07	1.07191899357318e-07\\
-12	7.88005707910657e-06	1.71730736680257e-07	1.71730736680257e-07\\
-11	1.25124346719951e-05	2.75793116808548e-07	2.75793116808548e-07\\
-10	1.98756647505992e-05	4.4419400733611e-07	4.4419400733611e-07\\
-9	3.1586644244798e-05	7.17867933371791e-07	7.17867933371791e-07\\
-8	5.02259210235159e-05	1.16478950177577e-06	1.16478950177577e-06\\
-7	7.99174715350084e-05	1.89863543369785e-06	1.89863543369785e-06\\
-6	0.000127261662231251	3.11090554468749e-06	3.11090554468749e-06\\
-5	0.000202839489229584	5.12656497431674e-06	5.12656497431674e-06\\
};
\addlegendentry{Proposed IS};
\addplot [
color=blue,
solid,
line width=1.0pt,
]
 plot [error bars/.cd, y dir = both, y explicit]
 table[row sep=crcr, y error plus index=2, y error minus index=3]{%
-25	0	8.68283868342212e-08	8.68283868342212e-08\\
-24	0	1.09321324121079e-07	1.09321324121079e-07\\
-23	0	1.37643366009666e-07	1.37643366009666e-07\\
-22	0	1.73306291447841e-07	1.73306291447841e-07\\
-21	3e-07	2.18214551318188e-07	2.18214551318188e-07\\
-20	6e-07	2.74767572693024e-07	2.74767572693024e-07\\
-19	8e-07	3.45988882533085e-07	3.45988882533085e-07\\
-18	1e-06	4.35689170270717e-07	4.35689170270717e-07\\
-17	1.3e-06	5.48672345383686e-07	5.48672345383686e-07\\
-16	1.8e-06	6.90996133168431e-07	6.90996133168431e-07\\
-15	3.2e-06	8.70301956629168e-07	8.70301956629168e-07\\
-14	4.3e-06	1.0962329962763e-06	1.0962329962763e-06\\
-13	5.8e-06	1.38096469629086e-06	1.38096469629086e-06\\
-12	8.2e-06	1.73987898227989e-06	1.73987898227989e-06\\
-11	1.22e-05	2.19242257767787e-06	2.19242257767787e-06\\
-10	1.99e-05	2.76320169572121e-06	2.76320169572121e-06\\
-9	3.23e-05	3.48338082478311e-06	3.48338082478311e-06\\
-8	5.1e-05	4.39247318965115e-06	4.39247318965115e-06\\
-7	8.22e-05	5.54063555117406e-06	5.54063555117406e-06\\
-6	0.000132	6.99161057895059e-06	6.99161057895059e-06\\
-5	0.0002062	8.82649490782143e-06	8.82649490782143e-06\\
};
\addlegendentry{Naive MC};
\addplot [
color=red,
solid,
line width=1.0pt,
forget plot
]
 plot [error bars/.cd, y dir = both, y explicit]
 table[row sep=crcr, y error plus index=2, y error minus index=3]{%
-10	7.78373293023172e-10	2.0086516730417e-11	2.0086516730417e-11\\
-9	1.94618827660978e-09	5.03514405717795e-11	5.03514405717795e-11\\
-8	4.86055982389739e-09	1.26162607228114e-10	1.26162607228114e-10\\
-7	1.21218273324297e-08	3.15949139793912e-10	3.15949139793912e-10\\
-6	3.01769418247368e-08	7.90709715141959e-10	7.90709715141959e-10\\
-5	7.49573799201945e-08	1.97725523456423e-09	1.97725523456423e-09\\
-4	1.85671432385381e-07	4.93935588983141e-09	4.93935588983141e-09\\
-3	4.58319971790844e-07	1.23235665532109e-08	1.23235665532109e-08\\
-2	1.12646124975491e-06	3.06997313410696e-08	3.06997313410696e-08\\
-1	2.75380291401249e-06	7.63329987947856e-08	7.63329987947856e-08\\
0	6.68747201720957e-06	1.89360564445709e-07	1.89360564445709e-07\\
1	1.61074721087067e-05	4.68445909240231e-07	4.68445909240231e-07\\
2	3.84074656120456e-05	1.15503024394919e-06	1.15503024394919e-06\\
3	9.04609223249168e-05	2.8369298735627e-06	2.8369298735627e-06\\
4	0.000209911580707446	6.93705400549952e-06	6.93705400549952e-06\\
5	0.00047847112816254	1.68774701573947e-05	1.68774701573947e-05\\
};
\addplot [
color=blue,
solid,
line width=1.0pt,
forget plot
]
 plot [error bars/.cd, y dir = both, y explicit]
 table[row sep=crcr, y error plus index=2, y error minus index=3]{%
-10	0	1.7292191417372e-08	1.7292191417372e-08\\
-9	0	2.7343146982148e-08	2.7343146982148e-08\\
-8	0	4.32114875105292e-08	4.32114875105292e-08\\
-7	0	6.82401724175594e-08	6.82401724175594e-08\\
-6	0	1.07669743296603e-07	1.07669743296603e-07\\
-5	2e-07	1.69692736767656e-07	1.69692736767656e-07\\
-4	4e-07	2.67072132993471e-07	2.67072132993471e-07\\
-3	5e-07	4.19604718357168e-07	4.19604718357168e-07\\
-2	1.5e-06	6.57830423619554e-07	6.57830423619554e-07\\
-1	3e-06	1.02854169297913e-06	1.02854169297913e-06\\
0	6.9e-06	1.60282315606538e-06	1.60282315606538e-06\\
1	1.69e-05	2.48751820389642e-06	2.48751820389642e-06\\
2	4.05e-05	3.84109949132678e-06	3.84109949132678e-06\\
3	9.25e-05	5.89477092604113e-06	5.89477092604113e-06\\
4	0.0002094	8.97901473725121e-06	8.97901473725121e-06\\
5	0.0004745	1.35543912099037e-05	1.35543912099037e-05\\
};
\node[black,draw,below] at (axis cs:-20.5,1.0e-05){{$L = 2$}};
\node[black,draw,below] at (axis cs:0,1.0e-08){{$L = 4$}};
\end{axis}
\end{tikzpicture}%
\caption{Error bars of MC and IS estimators of the outage probability of $L$-branch MRC receivers over the correlated Rician fading model. Number of samples $M = 10^7$ and $M^* = 10^4$.}
\label{fig35}
\end{figure}
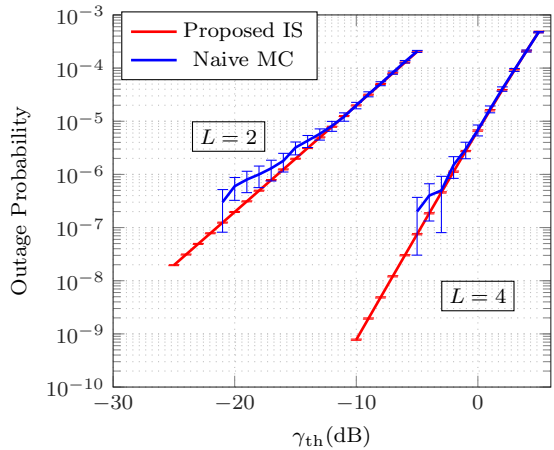
\section{Conclusion}
In this paper, we proposed efficient IS estimators for the left tail of the positive quadratic form in Gaussian random vectors. We discussed the construction of these estimators in both central and non-central cases, as well as both real and complex settings. We showed that these estimators are endowed with the bounded relative error property, making them an appealing alternative to naive MC, especially when the probability of interest is very small. To validate our results, we presented three examples, a toy example and two examples motivated by wireless communication theory. A clear gain in terms of simulation runs was observed from the numerical simulations, confirming the efficiency of our scheme compared to naive MC. However, we should note that the upper bound of the relative error of the proposed IS scheme became loose as the dimension of the problem at hand $N$ becomes larger, in both the real-valued and complex-valued cases.

\nocite{*}
\bibliographystyle{unsrt} 
\bibliography{reference}

\end{document}